\theoremstyle{plain}
\newtheorem{theorem}{Theorem}
\theoremstyle{definition}
\newtheorem{definition}{Definition}
\theoremstyle{remark}
\newcommand{\abs}[1]{\left| #1 \right|}
\newcommand{\ket}[1]{\left| #1 \right\rangle}
\newcommand{\proj}[1]{| #1 \rangle\!\langle #1 |}
\newcommand{\pname}[1]{\textsf{#1}}
\newcommand{\eps}{\varepsilon}
\newcommand{\I}{\mathbb{I}}
\newcommand{\qsucc}{q_\mathrm{succ}}
\newcommand{\suppl}{appendix}
\begin{document}

\title{Reference frame agreement in quantum networks}

\author{Tanvirul Islam} \email[]{tanvir@locc.la}
\affiliation{Centre for Quantum Technologies, National University of Singapore, 3 Science Drive 2, 117543 Singapore}
\affiliation{School of Computing, National University of Singapore, 13 Computing Drive, 117417 Singapore}
\author{Lo\"ick Magnin} \email[]{loick@locc.la}
\affiliation{Centre for Quantum Technologies, National University of Singapore, 3 Science Drive 2, 117543 Singapore}
\author{Brandon Sorg}
\affiliation{Centre for Quantum Technologies, National University of Singapore, 3 Science Drive 2, 117543 Singapore}
\author{Stephanie Wehner} \email[]{steph@locc.la}
\affiliation{Centre for Quantum Technologies, National University of Singapore, 3 Science Drive 2, 117543 Singapore}
\affiliation{School of Computing, National University of Singapore, 13 Computing Drive, 117417 Singapore}

\begin{abstract}
In order to communicate information in a quantum network effectively, all network nodes should share a common reference frame. Here, we propose to study how well $m$ nodes in a quantum network can establish a common reference frame from scratch, even though $t$ of them may be arbitrarily faulty. We present a protocol that allows all correctly functioning nodes to agree on a common reference frame as long as not more than $t < m/3$ nodes are faulty. Our protocol furthermore has the appealing property that it allows any existing two-party protocol for reference frame agreement to be lifted to a protocol for a quantum network.
\end{abstract}

\maketitle

Quantum networks are
gaining importance~\cite{kim08} for a variety of tasks such as
quantum distributed computing~\cite{BBG+13}, quantum cloud computing~\cite{BKB+12} and quantum key distribution (see e.g.~\cite{Elt02, PPM08, SLB+11, SFI+11}). 
From the current architecture of the internet one can predict that any such network will contain a large number of nodes that are distributed over widespread geographical locations on earth 
or on satellites~\cite{BTD+09,PYB+05,AFJ+08,BAM+06,AJP+03} and connected via quantum and classical communication channels~\cite{CZKM}. 
Some of the many challenges in building a quantum network spanning long distances are the ability to perform
quantum error correction~\cite{Shor95} and construction of quantum repeaters~ (see e.g.~\cite{SSH+11}). 
Yet, before we can hope to implement even such basic building blocks effectively, we would like all nodes in the quantum network to agree on a common reference frame to enable easy quantum communication.

A significant research effort has been devoted to developing protocols for agreeing on a reference frame between just two nodes~\cite{MP95, PS01, BBM04, CD04, BM06, GLM06, BRS07, SG12}. 
Such protocols demand quantum communication because in the absence of a pre-shared reference frame, a node cannot meaningfully share directional information to a distant node by exchanging only classical data. Instead, a quantum system must be sent, for example a qubit with its Bloch vector pointing in the required direction. 
A simple two-node protocol is thus
to send many copies of the same qubit such that the receiver can approximate the direction with certain level of accuracy.

Here, our goal is to allow $m > 2$ number of nodes in a quantum network to agree on a common reference frame, where in this first work we assume a fully connected network graph. That is, every node is connected to every other node using both classical and quantum communication channels. Why is this problem any more difficult than solving the problem for two nodes? 
Note that in an ideal case, where all the nodes are perfect and the channels connecting them are error-free, one node can send a reference frame to everyone else, 
and everyone can subsequently use that as their common frame of reference. But one can see that in a practical network, where some of the nodes can be arbitrarily faulty this simple method will not work because if the sending node is faulty, then it might send a different frame to different receivers and thus cause different nodes to output different reference frames.
That is, it can prevent them from \emph{agreeing} on a \emph{common} frame. Dealing with faulty nodes in a quantum network is challenging because we do not know \textit{apriori} which nodes are faulty, and to make the things even worse, the faulty nodes might have correlated errors. This is quite realistic in a practical setting where for example their hardware might have the same manufacturing defects, they might be located at a geographical location which is going through some disaster, or they might even be hijacked by an adversary trying to disrupt the network. 
Such arbitrarily correlated errors can all be characterized by imagining a worst case scenario in which the $t$ faulty nodes in the network are indeed actively cooperating to thwart our 
efforts in trying to establish a common reference frame. 

To state the requirements for our protocol for establishing a common Cartesian reference frame, let us first clarify what it means to (approximately) agree on a frame.
Let $v_i = (\alpha_i,\beta_i,\gamma_i)$ be the classical representation of the 
vector $\alpha_i \vec{\bm{x}}_i + \beta_i \vec{\bm{y}}_i + \gamma_i \vec{\bm{z}}_i$ held by the node $P_i$, expressed with relative to its local
Cartesian frame $(\vec{\bm{x}}_i,\vec{\bm{y}}_i,\vec{\bm{z}}_i)$.
We denote $d(v_i,v_j)$ the Euclidean distance between the two vectors~\footnote{For unit vectors $d$ takes values between $0$ and $2$.}, expressed with respect to the same reference frame. That is, when considering the distance between vectors $v_i$ held by node
$P_i$ and $v_j$ held by node $P_j$, we translate them into one fixed frame which without loss of generality we take to be the frame of the first node $P_i$. 
Informally, $P_i$ and $P_j$ thus (approximately) $\eta$-agree on a reference frame if $d(v_i,v_j) \leq \eta$ where $\eta$ is ideally small. We are now ready to define our goal.
 \begin{definition}
	For $\eta > 0$, a $\eta$-reference frame consensus protocol among $m$ network nodes is a protocol such that 
\begin{description}
	\item[Termination] Each correct node $P_i$ terminates the protocol, and outputs a reference frame $v_i$.
	\item[Consistency] For all pairs of correct nodes $P_i$ and $P_j$ we have $d(v_i,v_j) \leq \eta$. 
\end{description}
\end{definition}
Note that consistency does not require that all the correct nodes share the same reference frame ($\eta = 0$), 
but that each node has an approximation of it ($\eta$ is small). This is important because already any two-node
protocol using only a finite number of rounds of communication cannot allow the two nodes to share a frame exactly.

\section{Results}
\label{sec:results}

We introduce the first protocol to solve the reference frame agreement problem in a quantum network of $m$ nodes of which $t < m/3$ can be arbitrarily faulty. 
Our protocol has the appealing feature that it can use any two-node protocol as a black box. Such two-node protocols~\cite{BRS07} are characterized by the accuracy $\delta$ (i.e., the two nodes $\delta$-agree) and the success probability $q_{\rm succ}$ with which such an approximation guarantee is achieved. 

\begin{theorem}
	Given any two-node protocol to estimate a direction with accuracy $\delta$ and success probability $q_{\rm succ}$, the protocol \pname{RF-Consensus} 
	is a $(30\delta)$-reference frame consensus protocol tolerant to $t < m/3$ faulty nodes. It succeeds with probability at least $\qsucc^{m^2}$.
\end{theorem}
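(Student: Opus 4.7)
The plan is to decompose the argument into a probabilistic part (bounding the overall failure probability of the pairwise subroutines) and a deterministic part (showing that, whenever all pairwise invocations succeed, an appropriate aggregation yields $30\delta$-consistency). I would first handle the probability bound: the protocol \pname{RF-Consensus} presumably invokes the two-node direction-estimation subroutine once (or a constant number of times) for each of the $m^2$ ordered pairs of nodes, and each invocation succeeds with probability at least $\qsucc$. Taking the worst-case independent product bound, the probability that every invocation succeeds is at least $\qsucc^{m^2}$, matching the statement. The rest of the argument conditions on this joint success event.

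Conditioned on success, for every pair of correct nodes $P_i, P_j$ each has an estimate of the other's Cartesian frame that is $\delta$-accurate (and moreover, each correct $P_i$ has $\delta$-close estimates of every other correct node's axes). At this point the problem reduces to a geometric analogue of classical approximate Byzantine agreement: each correct node holds a vector of estimates indexed by the other nodes, where the entries corresponding to correct nodes are mutually $\delta$-close (after translating into a common frame) while the entries from the $t$ faulty nodes may be arbitrary. Termination is immediate from the construction of the protocol, so the core task is proving \emph{consistency}.

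For consistency I would use the familiar threshold arithmetic underlying $t < m/3$ Byzantine agreement: any subset of $m - t$ estimates chosen by a correct node contains at least $m - 2t > t$ entries from correct nodes, i.e., a strict majority of honest estimates. This overlap is what makes robust geometric aggregation possible. I would then argue that the \pname{RF-Consensus} aggregator — almost certainly some trimmed/centroid or pairwise-consistency-filter rule applied to these estimates — produces at each correct node a vector lying within a bounded ball around the "honest cloud." The final step is a triangle-inequality calculation translating between the frames of two correct nodes $P_i$ and $P_j$: the distance between their outputs is controlled by (a) the intrinsic $\delta$-accuracy of their pairwise estimate of each other's frame, (b) the $\delta$-accuracy of their shared estimates of the other correct nodes, and (c) a constant number of triangle-inequality hops through the common honest nodes used by the filter.

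The main obstacle I expect is tracking the constant through this geometric bookkeeping and obtaining exactly $30\delta$. Two correct nodes may well select \emph{different} subsets of trusted estimates, so one must show that their selections have large enough honest overlap that the aggregated outputs cannot drift apart. Converting the $\delta$-accuracy guarantees, which are measured within each node's \emph{local} frame, into guarantees expressed in a \emph{single} reference frame introduces additional rotational slack, and each additional hop through an intermediary node (the estimate that $P_j$ holds of $P_k$, expressed by $P_i$ via $P_i$'s estimate of $P_j$'s frame) contributes an additive $O(\delta)$ error. Counting these hops carefully — one for the pairwise estimate, a few for the translation into a common frame, and a few more from comparing the two different filtered subsets through their guaranteed common honest members — is where the constant $30$ arises; this accounting, rather than any conceptual leap, is the bulk of the technical work.
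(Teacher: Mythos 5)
There is a genuine gap: your proposal treats \pname{RF-Consensus} as a single round of ``everyone exchanges estimates, then each node applies a robust filter/aggregation rule,'' but no such one-shot local aggregation can satisfy the consistency requirement, and this is not how the protocol works. The missing idea is the \emph{king} structure. In the paper, a designated node $P_k$ first broadcasts a single candidate direction $w_k$; only then do the correct nodes hold inputs that are mutually $\delta$-close (namely, when the king is correct). In your picture the correct nodes start from arbitrary, unrelated local frames, so the ``honest cloud'' you condition on --- entries from correct nodes being mutually $\delta$-close --- simply does not exist, and there is nothing for a trimmed mean or consistency filter to converge to. Worse, even granting close honest inputs, a purely local filter cannot give the required all-or-nothing behaviour: the $t$ faulty nodes can send different values to different correct nodes so that $P_i$'s and $P_j$'s accepted subsets, despite overlapping in honest members, push their aggregates apart or cause one to accept and the other to reject. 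This split-brain attack is exactly why the paper layers \pname{Weak-Consensus} (local filtering, with $\perp$ as a possible output), \pname{Graded-Consensus} (a second exchange of flags producing a grade $g_i$ with the global property that \emph{any} correct node's grade $1$ forces \emph{all} correct outputs to be $30\delta$-close), and a \pname{Classical-Consensus} on the bit $g_i$ so that the correct nodes jointly accept or jointly reject the king's proposal. Your proposal contains none of this machinery and no substitute for it.

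The second missing piece is iteration and termination. A faulty king can cause all correct nodes to output $\perp$, so \pname{King-Consensus} must be repeated with $t+1$ different kings; since at most $t$ nodes are faulty, some round has a correct king, and $\delta$-persistency of that round guarantees every correct node outputs a direction rather than $\perp$. Your claim that ``termination is immediate from the construction'' skips this counting argument, which is where the bound $t<m/3$ interacts with the loop length. Your probability bookkeeping ($\qsucc^{m^2}$ from $m^2$ pairwise invocations) and the general flavour of the triangle-inequality accounting for the constant $30$ do match the paper's analysis of a single \pname{King-Consensus} round, but they are attached to an aggregation rule that cannot deliver consistency; the constant $30\delta$ in the paper arises specifically as $11\delta+8\delta+11\delta$ from hops through the graded-consensus sets $T_i[l_i]$, not from a generic centroid argument.
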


Our protocol is \emph{efficient} as we need only a linear (in the number of nodes $m$) number of rounds of quantum communication. 
As an example, we take the simplest two-node protocol in which the sender encodes the direction in the Bloch vector of a qubit and sends $n$ identical copies of it to the receiver. For accuracy $\delta>0$, the success probability of this two-node protocol is $\qsucc \geq 1 - e^{\Omega(-n\delta^2)}$. From this, we get the overall success probability of our protocol to be $q_{\rm succ}^{m^2}\geq 1-e^{-\Omega(n\delta^2-\log m)}$. We also show that this setting is \emph{robust} to noise on the channel connecting any two nodes. To give some examples of parameters, protocol \pname{RF-Consensus} achieves accuracy $30\delta = 0.02$ with success probability 99\% in a network of $m=10$ nodes with noiseless communication, if each node transmits $n\approx 3.1\times10^8$ qubits at each round.

Our protocol uses ideas of~\cite{FM00} which solves a simpler problem from classical distributed computing called Byzantine agreement~\cite{LSP82}, in particular we use classical consensus as a subroutine. This problem has been extensively 
studied using synchronous~\cite{FM97,BPV06} and asynchronous~\cite{AAM10,ADH08,Bra84,CT93} classical communication, as well as quantum communication~\cite{BH05}, also in a fail-stop model in which the faulty nodes can prevent the protocol from ever terminating~\cite{FGM01}. 
There, the correct nodes should 
perfectly agree on a single classical bit. Recall that we cannot send a direction classically without a shared reference frame, and hence we cannot use such protocols. 
In addition, we face two extra challenges: 
First, we are dealing with a continuous set of outcomes; And second, it is impossible to transmit a direction perfectly using a finite amount of communication, 
even on an otherwise perfect channel.
In quantum networks, furthermore, we also have errors on the communication
channel, which are pretty much unavoidable in a regime where we cannot easily perform quantum error correction due to the lack of a common frame.
In the Byzantine problem such errors would be attributed to faulty nodes, but in our setting this would mean that \emph{all} nodes in the network are faulty and no protocol could 
ever hope to succeed. Here, we thus require a careful treatment of such approximation errors.

\section{Model of communication}
\label{sec:model}

We assume that all the communication channels are public (faulty nodes can adapt their strategy depending on the network traffic), authenticated (faulty nodes cannot tamper with the channel connecting correct nodes), and synchronous (correct nodes know when they are supposed to receive a message, and if none is received, e.g. due to communication error, the protocol continues which ensures that our protocol cannot stall indefinitely). 

We only use quantum communications to send a direction between a sender and a receiver. As an example we use protocol \pname{2ED}, one of the simplest possible protocols: a sender creates many identical qubits with their Bloch vector pointing to the intended direction and the receiver measures them with Pauli measurements. From the statistics of the measurement outcomes, the receiver then estimates the Bloch vector's direction closely with high success probability. We use this protocol since it has some experimental  advantages for implementation: it does not require any quantum memory or creation of entangled states, and it succeeds even if the quantum channel has a depolarizing noise. But the downside of this choice is that our protocol is not optimal in the number of qubits sent to achieve a certain accuracy. Optimal protocols can align frames in the so-called Heisenberg limit, they have a quadratic gain over the one we use here~\cite{GLM06}. 

\begin{algorithm}
\SetAlgorithmName{Protocol}{protocol}{List of Protocols}
	\LinesNumbered
	\DontPrintSemicolon
	\SetKwInOut{Input}{input}\SetKwInOut{Output}{output}
	\Input{Sender, direction $u$}
	\Output{Receiver, direction $v$ }

\SetKwBlock{SEND}{Sender: \pname{2ED-Send}}{}	
\SEND{
	Prepare $3n$ qubits with direction $u$ \; 
	Send them to the receiver

}
\SetKwBlock{RECIEVE}{Receiver: \pname{2ED-Receive}}{}	
\RECIEVE{
	Receive $3n$ qubits from the sender\;
	Measure $n$ qubits with $\sigma_x$ and compute $p_x$, the frequency of getting outcome~$+1$ \;
	Similarly on the remaining qubits, compute $p_y$ and $p_z$ with measurements $\sigma_y$ and $\sigma_z$ on $n$ qubits each \;
	Assign $x \leftarrow 2 p_x-1$, $y \leftarrow 2 p_y-1$, $z \leftarrow 2 p_z-1$;\ 
	Assign $l \leftarrow \sqrt{x^2+y^2+z^2}$\;
	Output $v \leftarrow (x/l,y/l,z/l)$\;
}
\caption{\pname{2ED}}
\end{algorithm}

We prove the following theorem in the \suppl{}.
\begin{theorem}
For all $\delta>0$, using a depolarizing channel $\rho \mapsto (1-\eps)\rho + \eps\I/2$ between the sender and the receiver, protocol \pname{2ED} provides to the receiver a $(1-\eps)\delta +\frac{5\eps}{2}$ approximation of the sender's direction. It succeeds with probability $\qsucc \geq 1-e^{-\Omega(\delta^2n)}$.
\end{theorem}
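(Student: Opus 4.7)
The plan is to combine concentration of the measurement frequencies with a geometric bound on the normalization step. After the depolarizing channel, each qubit is in the state $\tfrac{1}{2}(\I + (1-\eps)\,u\!\cdot\!\vec{\sigma})$, so the probability of outcome $+1$ for a $\sigma_j$ measurement is $p_j^{*} = \tfrac{1}{2}(1 + (1-\eps) u_j)$ for $j \in \{x,y,z\}$. Consequently, the unnormalized vector $w := (x,y,z)$ computed by the receiver in protocol \pname{2ED} satisfies $\mathbb{E}[w] = (1-\eps)\,u$. The task therefore reduces to bounding $\|v - u\|$ for $v = w/\|w\|$ given a concentration bound on $w - (1-\eps)u$.

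First I would apply Hoeffding's inequality separately to each of the three empirical frequencies $p_x,p_y,p_z$, which are averages of $n$ independent $\{0,1\}$-valued samples. A union bound over the three axes yields $\|w - (1-\eps)\,u\|_2 \le \alpha$ with probability at least $1 - 6\, e^{-c\, n \alpha^2}$ for an absolute constant $c$. Setting $\alpha$ proportional to $(1-\eps)\,\delta$ gives the claimed success probability $\qsucc \ge 1 - e^{-\Omega(n \delta^2)}$, absorbing the $(1-\eps)^2$ factor into the $\Omega(\cdot)$.

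Second, I would convert the additive deviation $e := w - (1-\eps)\,u$ into a bound on the angular distance $\|v - u\|$. Since $v$ and $w$ are parallel, $\|v - w\| = \bigl|\,\|w\| - 1\,\bigr|$, and the triangle inequality applied to $w = (1-\eps)\,u + e$ gives $\|w - u\| \le \|e\| + \eps$, as well as $\bigl|\,\|w\|-1\,\bigr| \le \|w-u\|$. This already yields the crude bound $\|v-u\|\le 2\|e\| + 2\eps$. To sharpen the constants to the stated $(1-\eps)\delta + \tfrac{5\eps}{2}$, I would decompose $e = e_\parallel\, u + e_\perp$ with $e_\perp \perp u$: normalization exactly absorbs both the biasing scalar $(1-\eps)$ and the component $e_\parallel$, so only $e_\perp$ contributes to the angular error, with sensitivity of order $1/(1-\eps)$; matching this against $\alpha \sim (1-\eps)\delta$ then gives the $(1-\eps)\delta$ contribution, while the residual radial deficit $|1-\|w\||$ is the source of the $\tfrac{5\eps}{2}$ term.

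The main obstacle is precisely this constant-sharp geometric step. A naive triangle inequality reproduces the qualitative shape of the bound but gives a $\delta$-coefficient that does not improve with $\eps$ and an inflated coefficient on $\eps$. Extracting the exact form $(1-\eps)\delta + \tfrac{5\eps}{2}$ requires exploiting that $(1-\eps)u$ normalizes to $u$ with zero angular error, so the signal attenuation itself does not directly hurt the direction estimate; only the perpendicular statistical fluctuation and the loss of radial length below $1$ do. A short case analysis on the sign of $\|w\| - 1$ (equivalently, on whether the radial fluctuation exceeds $\eps$) should close the gap and yield the stated constants.
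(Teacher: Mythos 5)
Your first half (per-axis Hoeffding plus a union bound, then a geometric conversion from the unnormalized vector $w=(x,y,z)$ to the output $v=w/\|w\|$) is exactly the paper's route, and the observation that the post-channel Bloch vector is $(1-\eps)u$, so that $\mathbb{E}[w]=(1-\eps)u$, is correct. The problem is in your second half, and it is a real one: your stated mechanism for producing the $\tfrac{5\eps}{2}$ term cannot work. By your own decomposition the depolarizing bias is \emph{purely radial} ($\mathbb{E}[w]-u=-\eps u$), and $d(u,v)$ is the chordal distance between two \emph{unit} vectors, i.e.\ $2\sin(\theta/2)$ where $\theta$ is the angle between $w$ and $u$. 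The ``residual radial deficit $|1-\|w\||$'' therefore contributes exactly zero to $d(u,v)$; it is annihilated by the normalization you just invoked, so it cannot be ``the source of the $\tfrac{5\eps}{2}$ term.'' Your arithmetic also betrays this: a perpendicular sensitivity of order $1/(1-\eps)$ against a fluctuation $\alpha\sim(1-\eps)\delta$ gives a contribution $\sim\delta$, not $(1-\eps)\delta$, and leaves nothing proportional to $\eps$ alone. The proposed case analysis on the sign of $\|w\|-1$ will not close this gap.

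What the paper actually does is cruder and deliberately does \emph{not} exploit radiality: it bounds each frequency by $|p_j-\cos^2(\theta_j/2)|\leq(1-\eps)\tfrac{\delta}{5}+\tfrac{\eps}{2}$, where $\tfrac{\eps}{2}$ is a worst-case bound on the bias $\eps\,|\tfrac12-\cos^2(\theta_j/2)|$; this puts $w$ in a Euclidean ball of radius $\sqrt{3}\,(\tfrac{2(1-\eps)\delta}{5}+\eps)$ centered at $u$, the angle is at most the arcsine of that radius, and the elementary bound $2\sin(\tfrac12\arcsin\alpha)\leq\tfrac{5}{2\sqrt{3}}\alpha$ converts the radius into exactly $(1-\eps)\delta+\tfrac{5\eps}{2}$. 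In other words, the $\tfrac{5\eps}{2}$ arises from pessimistically treating the (radial) bias as if it could be entirely perpendicular. If you insist on your sharper decomposition, you would instead prove a bound with no additive $\eps$ term (roughly $\delta$, up to how you calibrate Hoeffding); that is a \emph{stronger} statement which still implies the theorem since $d\leq 2<\tfrac52$, but you must then say so rather than claim to recover the stated constants, and you should note that calibrating the deviation at $\alpha\propto(1-\eps)\delta$ puts a factor $(1-\eps)^2$ into the Hoeffding exponent, which is hidden by the $\Omega(\cdot)$ only for $\eps$ bounded away from $1$.
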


\section{Protocols  \label{sec:protocols}}
In this Section, we present a summary of our protocols and an outline of their proof of correctness. For further detail, we refer to the \suppl.

Our protocol works in two phases: First, a node is elected as the \emph{king} $P_k$. Second, the king choses a direction $w_k$ and sends it to all the other nodes. We denote $w_i$ the direction received by the node $P_i$ in its own frame. If the king is not faulty, \pname{2ED} ensures that $d(w_i,w_k)\leq\delta$. Then the correct nodes should decide either all to accept this direction (they output $v_i \approx w_k$ in their respective own frame), or all to reject it (output $\perp$). This second phase is known as \emph{king consensus}. More formally, a king consensus protocol should satisfy two properties: \emph{$\delta$-persistency}: if the king is not faulty, all the correct nodes $P_i$, should output $v_i$ such that $d(v_i,w_k)\leq\delta$; and \mbox{\emph{$\eta$-consistency}}: All the correct nodes reach a consensus, that is, they either all output $\perp$, or they all output directions that are $\eta$-close to each other, i.e., for all correct nodes $P_i$ and $P_j$, the distance $d(v_i,v_j)\leq\eta$.

We repeat those two phases with different kings as long as a consensus is not reached.
In particular, the protocol will terminate after at most $t+1$ rounds since there are a most $t$ faulty nodes.
\begin{algorithm} 
	\LinesNumbered
	\SetAlgorithmName{Protocol}{protocol}{List of Protocols}
	\DontPrintSemicolon
	\SetKwInOut{Input}{Input}
	\SetKwInOut{Output}{Output}
	\Input{None}
	\Output{A direction $v_i$}
	
	\For{ $k =1$ to $t+1$}{
		$v_i$ = \pname{King-Consensus}($P_k$) \;
		\If{$v_i \neq \perp$}{Output $v_i$}
	}
	\caption{\pname{RF-Consensus}}
	\label{RFC}
\end{algorithm}

The rest of this Letter is thus devoted to construct a king consensus protocol, which is done in three steps.
 
\paragraph{Step 1: Weak Consensus}

We first create a weaker protocol than king consensus by relaxing the condition that the correct nodes either \emph{all} output a direction, or \emph{all} output $\perp$. In a weak consensus, \emph{some} nodes can output $\perp$ and the other a direction. However we keep the condition that if two correct nodes $P_i$ and $P_j$ output directions $u_i$ and $u_j$, they should be close to each other.  Formally, we define a \emph{weak consensus} protocol as a protocol with the following two properties:  \emph{$\delta$-weak persistency}: if there exists a direction $w_k$ such that for every correct node $P_i$, $d(w_i,w_k) \leq \delta$, then  $d(u_i,w_k) \leq \delta$; and \emph{$\eta$-weak consistency}: For every pair of correct nodes $P_i$ and $P_j$ which output $u_i\neq \perp$ and $u_j\neq  \perp$ respectively, we have $d(u_i,u_j)\leq \eta$.

\begin{algorithm}
\SetAlgorithmName{Protocol}{protocol}{List of Protocols}
	\DontPrintSemicolon
	\LinesNumbered
	\SetKwInOut{Input}{Input}
	\SetKwInOut{Output}{Output}
	\Input{Direction $w_i$}
	\Output{Direction $u_i$ or $\perp$}

Send $w_i$ to all other nodes \;
Receive $a_i[j] \leftarrow$ direction received from $P_j$ \;
Create the set $S_i \leftarrow \{ P_j: d(w_i, a_i[j]) \leq 3\delta \}$ \;
\uIf{$|S_i| \geq m-t$}
	{Assign $u_i\leftarrow w_i$}
\Else{Assign $u_i \leftarrow \perp$}
Output $u_i$
\caption{\pname{Weak-Consensus}}
\end{algorithm}

Protocol \pname{Weak-Consensus} achieves $\delta$-weak persistency and $(8\delta)$-weak consistency with probability at least $\qsucc^{m^2-m}$ where $\delta$ is the accuracy achieved with probability $\qsucc$  by the two-party protocol used to send directions.

Here, with probability at least $\qsucc^{m^2-m}$, for every correct nodes $P_i$ and $P_j$, $d(a_i[j], w_j)\leq\delta$. It is easy to see that this protocol is $\delta$-weak persistent. We sketch the proof of the weak consistency. Consider the sets $S_i$ and $S_j$ of two correct nodes $P_i$ and $P_j$. If $u_i\neq\perp$ and $u_j\neq\perp$, then $S_i$ and $S_j$ contains at least one correct node in common, let us call it $P_\alpha$. Thus,
$d(u_i,u_j) \leq d(u_i,a_i[\alpha]) + d(a_i[\alpha],w_\alpha) + d(w_\alpha,a_j[\alpha]) + d(a_j[\alpha],u_j) \leq 3\delta + \delta + \delta + 3\delta = 8\delta$.

\paragraph{Step 2: Graded Consensus.}
In a king consensus protocol, the correct nodes should have a ``global'' behaviour, as they should all either output a direction or $\perp$, whereas in the weak consensus each node has a ``local'' strategy.  A \emph{graded consensus} protocol behaves intermediately. Alongside a direction $v_i \neq \perp$ the nodes also output a grade $g_i \in\{0, 1\}$ which carries a ``global''  property, namely, \emph{$\eta$-graded consistency}: If \emph{any} correct node outputs a grade 1, then the directions between \emph{all} the correct nodes should be $\eta$-close to each other, that is, for every pair ($P_i, P_j$) of correct nodes, $d(v_i,v_j) \leq \eta$.

\begin{algorithm}
\SetAlgorithmName{Protocol}{protocol}{List of Protocols}
	\LinesNumbered
	\DontPrintSemicolon
	\SetKwInOut{Input}{Input}\SetKwInOut{Output}{Output}
	\Input{A direction $w_i$}
	\Output{A direction $v_i$ and a grade $g_i \in \{0,1\}$}
	
	Run \pname{Weak-Consensus}($w_i$)\; 
	\tcp{This initialises the variables $u_i$ and $a_i[j]$'s}
	\uIf{$u_i = \perp$}{Send flag $f_i = 0$ to all other nodes}
	\Else{Send flag $f_i = 1$ to all other nodes}
	\ForAll{nodes $P_j$}{$f_i[j] \leftarrow$ Receive $f_j$}
	\ForAll{nodes $P_j$ with $f_i[j]=1$}{Create set $T_i[j] \leftarrow \{P_k: f_i[k]=1,$ and $d(a_i[j], a_i[k]) \leq 10\delta\}$ }
	Assign $l_i\leftarrow \arg \max \{|T_i[j]|\}$ \;
	\uIf{$f_i  = 1$}{Assign $v_i \leftarrow w_i$}
	\Else{Assign $v_i \leftarrow a_i[l_i]$}
	\uIf{$|T_i[l_i]| > m - t $}{Assign $g_i \leftarrow 1$}
	\Else{Assign $g_i \leftarrow 0$}
	Output $(v_i, g_i)$

\caption{\pname{Graded-Consensus}}
\end{algorithm}

Protocol \pname{Graded-Consensus} achieves $(30\delta)$-graded consistency. It succeeds with probability at least $\qsucc^{m^2-m}$. 
 
The main idea of \pname{Graded-Consensus} is that the nodes which output $\perp$ in the weak consensus inform the other nodes (by sending the flags $f_i$'s). The first consequence is that for all correct nodes $P_\alpha$ and $P_\beta$ with $f_\alpha=f_\beta=1$,  $d(u_\alpha,u_\beta)\leq 8\delta$. The second consequence is that if a correct node has grade 1, then for all correct nodes $P_i$ and $P_j$, the sets $T_i$ and $T_j$ each contains at least one correct node, let us denote them $P_\alpha$ and $P_\beta$. Thus, $d(v_i,u_\alpha) \leq d(v_i,a_i[\alpha]) + d(a_i[\alpha],u_\alpha) \leq 10\delta+\delta = 11\delta$. Finally, we get, $d(v_i,v_j) \leq d(v_i,u_k) + d(u_k,u_l) + d(u_l,v_j) \leq  11\delta + 8\delta + 11\delta = 30\delta$.

\paragraph{Step 3: King Consensus.}
We are ready to present the \pname{King-Consensus} protocol that achieves $\delta$-persistency and $(30\delta)$-consistency. Our protocol uses \pname{Classical-Consensus} as a subroutine. It solves a problem which is closely related to Byzantine agreement. Here, every node $P_i$ starts with a bit $g_i$ and outputs a bit $y_i$. All the correct nodes agree on a bit $b$, that is if $P_i$ is correct, $y_i=b$ where at least one of the correct nodes, $P_j$ has input $g_j=b$. Classical consensus can be reached if there are $t < m/3$ faulty nodes, for an example of such protocol, see e.g.~\cite{PSL80}.

\begin{algorithm}
\LinesNumbered
\SetAlgorithmName{Protocol}{protocol}{List of Protocols}
	\DontPrintSemicolon
	\SetKwInOut{Input}{Input}
	\SetKwInOut{Output}{Output}
	\Input{Id of the king $P_k$.}
	\Output{A direction $v_i$ or $\perp$}

\uIf{I am the king}{
Fix an arbitrary direction $w_k$ \;
Send $w_k$  to all other nodes}
\Else{Receive $w_i \leftarrow$ direction received from the king}
Assign $(v_i, g_i) \leftarrow \pname{Graded-Consensus}(w_i)$\;
Assign $y_i \leftarrow \pname{Classical-Consensus}(g_i)$\; 
\uIf{$y_i = 1$}{Output $v_i$}
\Else{Output $\perp$}
\caption{\pname{King-Consensus}}
\end{algorithm}
If the king is not faulty, then all the correct nodes will have grade $g_i =1$. Hence the classical consensus will also be reached with value $y_i = 1$. So, all the correct nodes will accept the direction shared by the king. If the king is faulty and yet the correct nodes reach a consensus with $y_i = 1$, it means that at least one correct node had grade $1$. In this case the $(30\delta)$-graded consistency implies that $d(v_i,v_j)\leq 30\delta$ for all the correct nodes $P_i$ and $P_j$. As a consequence, \pname{King-Consensus} is ($30\delta$)-consistent, and so is \pname{RF-Consensus}.

\section{Discussion}

We have presented the first protocol for reference frame agreement in a quantum network. Even in the classical setting, the algorithms to solve the Byzantine agreement problem are surprisingly complicated. We would be very keen to know if simpler and more efficient protocols could be designed for our setting, possibly by using entangled states.
It is an interesting open question to construct protocols that also work in an asynchronous communication
model. The latter is already challenging for the classical case~\cite{Bra84,CT93,ADH08,AAM10}, 
so we expect a similar behavior to hold here.  
Another interesting question is whether more faulty nodes than $t<m/3$ can be tolerated. 
If our protocol were to succeed with probability 1 and $\eta$ sufficiently small, we can prove that it is optimal in that sense by adapting the classical proof~\cite{FLM85} to our setting. However, for aligning reference frames, any protocol can only succeed with probability strictly less than 1. This problem has been partially studied in the classical case~\cite{GY89}. Even in the constant error scenario the optimal number of faulty nodes that can be tolerated is not known for the classical Byzantine agreement problem~\cite{FWW06}. This leaves hope to find protocols that can tolerate $t<m/2$ faulty nodes when allowing constant success probability both for Byzantine and reference frame agreement.

\begin{acknowledgments}
We thank Esther H{\"a}nggi and J{\"u}rg Wullschleger for useful discussions.
This work is funded by the Ministry of Education (MOE) and National Research
Foundation Singapore, as well as MOE Tier 3 Grant 
MOE2012-T3-1-009.
\end{acknowledgments}

\bibliography{Byzantine}

\begin{thebibliography}{37}
\expandafter\ifx\csname natexlab\endcsname\relax\def\natexlab#1{#1}\fi
\expandafter\ifx\csname bibnamefont\endcsname\relax
  \def\bibnamefont#1{#1}\fi
\expandafter\ifx\csname bibfnamefont\endcsname\relax
  \def\bibfnamefont#1{#1}\fi
\expandafter\ifx\csname citenamefont\endcsname\relax
  \def\citenamefont#1{#1}\fi
\expandafter\ifx\csname url\endcsname\relax
  \def\url#1{\texttt{#1}}\fi
\expandafter\ifx\csname urlprefix\endcsname\relax\def\urlprefix{URL }\fi
\providecommand{\bibinfo}[2]{#2}
\providecommand{\eprint}[2][]{\url{#2}}

\bibitem[{\citenamefont{Kimble}(2008)}]{kim08}
\bibinfo{author}{\bibfnamefont{H.~J.} \bibnamefont{Kimble}},
  \bibinfo{journal}{Nature} \textbf{\bibinfo{volume}{453}},
  \bibinfo{pages}{1023} (\bibinfo{year}{2008}).

\bibitem[{\citenamefont{Beals et~al.}(2013)\citenamefont{Beals, Brierley, Gray,
  Harrow, Kutin, Linden, Shepherd, and Stather}}]{BBG+13}
\bibinfo{author}{\bibfnamefont{R.}~\bibnamefont{Beals}},
  \bibinfo{author}{\bibfnamefont{S.}~\bibnamefont{Brierley}},
  \bibinfo{author}{\bibfnamefont{O.}~\bibnamefont{Gray}},
  \bibinfo{author}{\bibfnamefont{A.~W.} \bibnamefont{Harrow}},
  \bibinfo{author}{\bibfnamefont{S.}~\bibnamefont{Kutin}},
  \bibinfo{author}{\bibfnamefont{N.}~\bibnamefont{Linden}},
  \bibinfo{author}{\bibfnamefont{D.}~\bibnamefont{Shepherd}}, \bibnamefont{and}
  \bibinfo{author}{\bibfnamefont{M.}~\bibnamefont{Stather}},
  \bibinfo{journal}{Proc. R. Soc. A} \textbf{\bibinfo{volume}{469}}
  (\bibinfo{year}{2013}).

\bibitem[{\citenamefont{Barz et~al.}(2012)\citenamefont{Barz, Kashefi,
  Broadbent, Fitzsimons, Zeilinger, and Walther}}]{BKB+12}
\bibinfo{author}{\bibfnamefont{S.}~\bibnamefont{Barz}},
  \bibinfo{author}{\bibfnamefont{E.}~\bibnamefont{Kashefi}},
  \bibinfo{author}{\bibfnamefont{A.}~\bibnamefont{Broadbent}},
  \bibinfo{author}{\bibfnamefont{J.~F.} \bibnamefont{Fitzsimons}},
  \bibinfo{author}{\bibfnamefont{A.}~\bibnamefont{Zeilinger}},
  \bibnamefont{and} \bibinfo{author}{\bibfnamefont{P.}~\bibnamefont{Walther}},
  \bibinfo{journal}{Science} \textbf{\bibinfo{volume}{335}},
  \bibinfo{pages}{303} (\bibinfo{year}{2012}).

\bibitem[{\citenamefont{Elliott}(2002)}]{Elt02}
\bibinfo{author}{\bibfnamefont{C.}~\bibnamefont{Elliott}},
  \bibinfo{journal}{New J. Phys.} \textbf{\bibinfo{volume}{4}},
  \bibinfo{pages}{46} (\bibinfo{year}{2002}).

\bibitem[{\citenamefont{Poppe et~al.}(2008)\citenamefont{Poppe, Peev, and
  Maurhart}}]{PPM08}
\bibinfo{author}{\bibfnamefont{A.}~\bibnamefont{Poppe}},
  \bibinfo{author}{\bibfnamefont{M.}~\bibnamefont{Peev}}, \bibnamefont{and}
  \bibinfo{author}{\bibfnamefont{O.}~\bibnamefont{Maurhart}},
  \bibinfo{journal}{Int. J. Quantum Inf.} \textbf{\bibinfo{volume}{06}},
  \bibinfo{pages}{209} (\bibinfo{year}{2008}).

\bibitem[{\citenamefont{Stucki et~al.}(2011)\citenamefont{Stucki, Legr\'e,
  Buntschu, Clausen, Felber, Gisin, Henzen, Junod, Litzistorf, Monbaron
  et~al.}}]{SLB+11}
\bibinfo{author}{\bibfnamefont{D.}~\bibnamefont{Stucki}},
  \bibinfo{author}{\bibfnamefont{M.}~\bibnamefont{Legr\'e}},
  \bibinfo{author}{\bibfnamefont{F.}~\bibnamefont{Buntschu}},
  \bibinfo{author}{\bibfnamefont{B.}~\bibnamefont{Clausen}},
  \bibinfo{author}{\bibfnamefont{N.}~\bibnamefont{Felber}},
  \bibinfo{author}{\bibfnamefont{N.}~\bibnamefont{Gisin}},
  \bibinfo{author}{\bibfnamefont{L.}~\bibnamefont{Henzen}},
  \bibinfo{author}{\bibfnamefont{P.}~\bibnamefont{Junod}},
  \bibinfo{author}{\bibfnamefont{G.}~\bibnamefont{Litzistorf}},
  \bibinfo{author}{\bibfnamefont{P.}~\bibnamefont{Monbaron}},
  \bibnamefont{et~al.}, \bibinfo{journal}{New J. Phys.}
  \textbf{\bibinfo{volume}{13}}, \bibinfo{pages}{123001}
  (\bibinfo{year}{2011}).

\bibitem[{\citenamefont{Sasaki et~al.}(2011)\citenamefont{Sasaki, Fujiwara,
  Ishizuka, Klaus, Wakui, Takeoka, Miki, Yamashita, Wang, Tanaka
  et~al.}}]{SFI+11}
\bibinfo{author}{\bibfnamefont{M.}~\bibnamefont{Sasaki}},
  \bibinfo{author}{\bibfnamefont{M.}~\bibnamefont{Fujiwara}},
  \bibinfo{author}{\bibfnamefont{H.}~\bibnamefont{Ishizuka}},
  \bibinfo{author}{\bibfnamefont{W.}~\bibnamefont{Klaus}},
  \bibinfo{author}{\bibfnamefont{K.}~\bibnamefont{Wakui}},
  \bibinfo{author}{\bibfnamefont{M.}~\bibnamefont{Takeoka}},
  \bibinfo{author}{\bibfnamefont{S.}~\bibnamefont{Miki}},
  \bibinfo{author}{\bibfnamefont{T.}~\bibnamefont{Yamashita}},
  \bibinfo{author}{\bibfnamefont{Z.}~\bibnamefont{Wang}},
  \bibinfo{author}{\bibfnamefont{A.}~\bibnamefont{Tanaka}},
  \bibnamefont{et~al.}, \bibinfo{journal}{Opt. Express}
  \textbf{\bibinfo{volume}{19}}, \bibinfo{pages}{10387} (\bibinfo{year}{2011}).

\bibitem[{\citenamefont{Bonato et~al.}(2009)\citenamefont{Bonato, Tomaello,
  Deppo, Naletto, and Villoresi}}]{BTD+09}
\bibinfo{author}{\bibfnamefont{C.}~\bibnamefont{Bonato}},
  \bibinfo{author}{\bibfnamefont{A.}~\bibnamefont{Tomaello}},
  \bibinfo{author}{\bibfnamefont{V.~D.} \bibnamefont{Deppo}},
  \bibinfo{author}{\bibfnamefont{G.}~\bibnamefont{Naletto}}, \bibnamefont{and}
  \bibinfo{author}{\bibfnamefont{P.}~\bibnamefont{Villoresi}},
  \bibinfo{journal}{New J. Phys.} \textbf{\bibinfo{volume}{11}},
  \bibinfo{pages}{045017} (\bibinfo{year}{2009}).

\bibitem[{\citenamefont{Peng et~al.}(2005)\citenamefont{Peng, Yang, Bao, Zhang,
  Jin, Feng, Yang, Yang, Yin, Zhang et~al.}}]{PYB+05}
\bibinfo{author}{\bibfnamefont{C.-Z.} \bibnamefont{Peng}},
  \bibinfo{author}{\bibfnamefont{T.}~\bibnamefont{Yang}},
  \bibinfo{author}{\bibfnamefont{X.-H.} \bibnamefont{Bao}},
  \bibinfo{author}{\bibfnamefont{J.}~\bibnamefont{Zhang}},
  \bibinfo{author}{\bibfnamefont{X.-M.} \bibnamefont{Jin}},
  \bibinfo{author}{\bibfnamefont{F.-Y.} \bibnamefont{Feng}},
  \bibinfo{author}{\bibfnamefont{B.}~\bibnamefont{Yang}},
  \bibinfo{author}{\bibfnamefont{J.}~\bibnamefont{Yang}},
  \bibinfo{author}{\bibfnamefont{J.}~\bibnamefont{Yin}},
  \bibinfo{author}{\bibfnamefont{Q.}~\bibnamefont{Zhang}},
  \bibnamefont{et~al.}, \bibinfo{journal}{Phys. Rev. Lett.}
  \textbf{\bibinfo{volume}{94}}, \bibinfo{pages}{150501}
  (\bibinfo{year}{2005}).

\bibitem[{\citenamefont{Armengol et~al.}(2008)\citenamefont{Armengol, Furch,
  de~Matos, Minster, Cacciapuoti, Pfennigbauer, Aspelmeyer, Jennewein, Ursin,
  Schmitt-Manderbach et~al.}}]{AFJ+08}
\bibinfo{author}{\bibfnamefont{J.~M.~P.} \bibnamefont{Armengol}},
  \bibinfo{author}{\bibfnamefont{B.}~\bibnamefont{Furch}},
  \bibinfo{author}{\bibfnamefont{C.~J.} \bibnamefont{de~Matos}},
  \bibinfo{author}{\bibfnamefont{O.}~\bibnamefont{Minster}},
  \bibinfo{author}{\bibfnamefont{L.}~\bibnamefont{Cacciapuoti}},
  \bibinfo{author}{\bibfnamefont{M.}~\bibnamefont{Pfennigbauer}},
  \bibinfo{author}{\bibfnamefont{M.}~\bibnamefont{Aspelmeyer}},
  \bibinfo{author}{\bibfnamefont{T.}~\bibnamefont{Jennewein}},
  \bibinfo{author}{\bibfnamefont{R.}~\bibnamefont{Ursin}},
  \bibinfo{author}{\bibfnamefont{T.}~\bibnamefont{Schmitt-Manderbach}},
  \bibnamefont{et~al.}, \bibinfo{journal}{Acta Astronaut.}
  \textbf{\bibinfo{volume}{63}}, \bibinfo{pages}{165 } (\bibinfo{year}{2008}).

\bibitem[{\citenamefont{Bonato et~al.}(2006)\citenamefont{Bonato, Aspelmeyer,
  Jennewein, Pernechele, Villoresi, and Zeilinger}}]{BAM+06}
\bibinfo{author}{\bibfnamefont{C.}~\bibnamefont{Bonato}},
  \bibinfo{author}{\bibfnamefont{M.}~\bibnamefont{Aspelmeyer}},
  \bibinfo{author}{\bibfnamefont{T.}~\bibnamefont{Jennewein}},
  \bibinfo{author}{\bibfnamefont{C.}~\bibnamefont{Pernechele}},
  \bibinfo{author}{\bibfnamefont{P.}~\bibnamefont{Villoresi}},
  \bibnamefont{and}
  \bibinfo{author}{\bibfnamefont{A.}~\bibnamefont{Zeilinger}},
  \bibinfo{journal}{Opt. Express} \textbf{\bibinfo{volume}{14}},
  \bibinfo{pages}{10050} (\bibinfo{year}{2006}).

\bibitem[{\citenamefont{Aspelmeyer et~al.}(2003)\citenamefont{Aspelmeyer,
  Jennewein, Pfennigbauer, Leeb, and Zeilinger}}]{AJP+03}
\bibinfo{author}{\bibfnamefont{M.}~\bibnamefont{Aspelmeyer}},
  \bibinfo{author}{\bibfnamefont{T.}~\bibnamefont{Jennewein}},
  \bibinfo{author}{\bibfnamefont{M.}~\bibnamefont{Pfennigbauer}},
  \bibinfo{author}{\bibfnamefont{W.}~\bibnamefont{Leeb}}, \bibnamefont{and}
  \bibinfo{author}{\bibfnamefont{A.}~\bibnamefont{Zeilinger}},
  \bibinfo{journal}{IEEE J. Sel. Topics Quantum Electron.}
  \textbf{\bibinfo{volume}{9}}, \bibinfo{pages}{1541} (\bibinfo{year}{2003}).

\bibitem[{\citenamefont{Cirac et~al.}(1997)\citenamefont{Cirac, Zoller, Kimble,
  and Mabuchi}}]{CZKM}
\bibinfo{author}{\bibfnamefont{J.~I.} \bibnamefont{Cirac}},
  \bibinfo{author}{\bibfnamefont{P.}~\bibnamefont{Zoller}},
  \bibinfo{author}{\bibfnamefont{H.~J.} \bibnamefont{Kimble}},
  \bibnamefont{and} \bibinfo{author}{\bibfnamefont{H.}~\bibnamefont{Mabuchi}},
  \bibinfo{journal}{Phys. Rev. Lett.} \textbf{\bibinfo{volume}{78}},
  \bibinfo{pages}{3221} (\bibinfo{year}{1997}).

\bibitem[{\citenamefont{Shor}(1995)}]{Shor95}
\bibinfo{author}{\bibfnamefont{P.~W.} \bibnamefont{Shor}},
  \bibinfo{journal}{Phys. Rev. A} \textbf{\bibinfo{volume}{52}},
  \bibinfo{pages}{(R)2493} (\bibinfo{year}{1995}).

\bibitem[{\citenamefont{Sangouard et~al.}(2011)\citenamefont{Sangouard, Simon,
  de~Riedmatten, and Gisin}}]{SSH+11}
\bibinfo{author}{\bibfnamefont{N.}~\bibnamefont{Sangouard}},
  \bibinfo{author}{\bibfnamefont{C.}~\bibnamefont{Simon}},
  \bibinfo{author}{\bibfnamefont{H.}~\bibnamefont{de~Riedmatten}},
  \bibnamefont{and} \bibinfo{author}{\bibfnamefont{N.}~\bibnamefont{Gisin}},
  \bibinfo{journal}{Rev. Mod. Phys.} \textbf{\bibinfo{volume}{83}},
  \bibinfo{pages}{33} (\bibinfo{year}{2011}).

\bibitem[{\citenamefont{Massar and Popescu}(1995)}]{MP95}
\bibinfo{author}{\bibfnamefont{S.}~\bibnamefont{Massar}} \bibnamefont{and}
  \bibinfo{author}{\bibfnamefont{S.}~\bibnamefont{Popescu}},
  \bibinfo{journal}{Phys. Rev. Lett.} \textbf{\bibinfo{volume}{74}},
  \bibinfo{pages}{1259} (\bibinfo{year}{1995}).

\bibitem[{\citenamefont{Peres and Scudo}(2001)}]{PS01}
\bibinfo{author}{\bibfnamefont{A.}~\bibnamefont{Peres}} \bibnamefont{and}
  \bibinfo{author}{\bibfnamefont{P.~F.} \bibnamefont{Scudo}},
  \bibinfo{journal}{Phys. Rev. Lett.} \textbf{\bibinfo{volume}{87}},
  \bibinfo{pages}{167901} (\bibinfo{year}{2001}).

\bibitem[{\citenamefont{Bagan et~al.}(2004)\citenamefont{Bagan, Baig,
  Mu{\~n}oz-Tapia, and Rodriguez}}]{BBM04}
\bibinfo{author}{\bibfnamefont{E.}~\bibnamefont{Bagan}},
  \bibinfo{author}{\bibfnamefont{M.}~\bibnamefont{Baig}},
  \bibinfo{author}{\bibfnamefont{R.}~\bibnamefont{Mu{\~n}oz-Tapia}},
  \bibnamefont{and}
  \bibinfo{author}{\bibfnamefont{A.}~\bibnamefont{Rodriguez}},
  \bibinfo{journal}{Phys. Rev. A} \textbf{\bibinfo{volume}{69}},
  \bibinfo{pages}{010304} (\bibinfo{year}{2004}).

\bibitem[{\citenamefont{Chiribella and D'Ariano}(2004)}]{CD04}
\bibinfo{author}{\bibfnamefont{G.}~\bibnamefont{Chiribella}} \bibnamefont{and}
  \bibinfo{author}{\bibfnamefont{G.~M.} \bibnamefont{D'Ariano}},
  \bibinfo{journal}{J. Math. Phys.} \textbf{\bibinfo{volume}{45}},
  \bibinfo{pages}{4435} (\bibinfo{year}{2004}).

\bibitem[{\citenamefont{Bagan and Mu{\~n}oz-Tapia}(2006)}]{BM06}
\bibinfo{author}{\bibfnamefont{E.}~\bibnamefont{Bagan}} \bibnamefont{and}
  \bibinfo{author}{\bibfnamefont{R.}~\bibnamefont{Mu{\~n}oz-Tapia}},
  \bibinfo{journal}{Int. J. Quantum Inf.} \textbf{\bibinfo{volume}{4}},
  \bibinfo{pages}{5} (\bibinfo{year}{2006}).

\bibitem[{\citenamefont{Giovannetti et~al.}(2006)\citenamefont{Giovannetti,
  Lloyd, and Maccone}}]{GLM06}
\bibinfo{author}{\bibfnamefont{V.}~\bibnamefont{Giovannetti}},
  \bibinfo{author}{\bibfnamefont{S.}~\bibnamefont{Lloyd}}, \bibnamefont{and}
  \bibinfo{author}{\bibfnamefont{L.}~\bibnamefont{Maccone}},
  \bibinfo{journal}{Phys. Rev. Lett.} \textbf{\bibinfo{volume}{96}},
  \bibinfo{pages}{010401} (\bibinfo{year}{2006}).

\bibitem[{\citenamefont{Bartlett et~al.}(2007)\citenamefont{Bartlett, Rudolph,
  and Spekkens}}]{BRS07}
\bibinfo{author}{\bibfnamefont{S.~D.} \bibnamefont{Bartlett}},
  \bibinfo{author}{\bibfnamefont{T.}~\bibnamefont{Rudolph}}, \bibnamefont{and}
  \bibinfo{author}{\bibfnamefont{R.~W.} \bibnamefont{Spekkens}},
  \bibinfo{journal}{Rev. Mod. Phys.} \textbf{\bibinfo{volume}{79}},
  \bibinfo{pages}{555} (\bibinfo{year}{2007}).

\bibitem[{\citenamefont{Skotiniotis and Gour}(2012)}]{SG12}
\bibinfo{author}{\bibfnamefont{M.}~\bibnamefont{Skotiniotis}} \bibnamefont{and}
  \bibinfo{author}{\bibfnamefont{G.}~\bibnamefont{Gour}}, \bibinfo{journal}{New
  J. Phys.} \textbf{\bibinfo{volume}{14}}, \bibinfo{pages}{073022}
  (\bibinfo{year}{2012}).

\bibitem[{\citenamefont{Fitzi and Maurer}(2000)}]{FM00}
\bibinfo{author}{\bibfnamefont{M.}~\bibnamefont{Fitzi}} \bibnamefont{and}
  \bibinfo{author}{\bibfnamefont{U.}~\bibnamefont{Maurer}}, in
  \emph{\bibinfo{booktitle}{Proc. ACM STOC'00}} (\bibinfo{year}{2000}), pp.
  \bibinfo{pages}{494--503}.

\bibitem[{\citenamefont{Lamport et~al.}(1982)\citenamefont{Lamport, Shostak,
  and Pease}}]{LSP82}
\bibinfo{author}{\bibfnamefont{L.}~\bibnamefont{Lamport}},
  \bibinfo{author}{\bibfnamefont{R.}~\bibnamefont{Shostak}}, \bibnamefont{and}
  \bibinfo{author}{\bibfnamefont{M.}~\bibnamefont{Pease}},
  \bibinfo{journal}{ACM T. Prog. Lang. Sys.} \textbf{\bibinfo{volume}{4}},
  \bibinfo{pages}{382} (\bibinfo{year}{1982}).

\bibitem[{\citenamefont{Feldman and Micali}(1997)}]{FM97}
\bibinfo{author}{\bibfnamefont{P.}~\bibnamefont{Feldman}} \bibnamefont{and}
  \bibinfo{author}{\bibfnamefont{S.}~\bibnamefont{Micali}},
  \bibinfo{journal}{SIAM J. Comput.} \textbf{\bibinfo{volume}{26}},
  \bibinfo{pages}{873} (\bibinfo{year}{1997}).

\bibitem[{\citenamefont{Ben-Or et~al.}(2006)\citenamefont{Ben-Or, Pavlov, and
  Vaikuntanathan}}]{BPV06}
\bibinfo{author}{\bibfnamefont{M.}~\bibnamefont{Ben-Or}},
  \bibinfo{author}{\bibfnamefont{E.}~\bibnamefont{Pavlov}}, \bibnamefont{and}
  \bibinfo{author}{\bibfnamefont{V.}~\bibnamefont{Vaikuntanathan}}, in
  \emph{\bibinfo{booktitle}{Proc. ACM STOC'06}} (\bibinfo{year}{2006}), pp.
  \bibinfo{pages}{179--186}.

\bibitem[{\citenamefont{Abraham et~al.}(2010)\citenamefont{Abraham, Aguilera,
  and Malkhi}}]{AAM10}
\bibinfo{author}{\bibfnamefont{I.}~\bibnamefont{Abraham}},
  \bibinfo{author}{\bibfnamefont{M.~K.} \bibnamefont{Aguilera}},
  \bibnamefont{and} \bibinfo{author}{\bibfnamefont{D.}~\bibnamefont{Malkhi}},
  in \emph{\bibinfo{booktitle}{Proc. DISC'10}} (\bibinfo{year}{2010}), pp.
  \bibinfo{pages}{4--19}.

\bibitem[{\citenamefont{Abraham et~al.}(2008)\citenamefont{Abraham, Dolev, and
  Halpern}}]{ADH08}
\bibinfo{author}{\bibfnamefont{I.}~\bibnamefont{Abraham}},
  \bibinfo{author}{\bibfnamefont{D.}~\bibnamefont{Dolev}}, \bibnamefont{and}
  \bibinfo{author}{\bibfnamefont{J.~Y.} \bibnamefont{Halpern}}, in
  \emph{\bibinfo{booktitle}{Proc. ACM PODC'08}} (\bibinfo{publisher}{ACM},
  \bibinfo{year}{2008}), pp. \bibinfo{pages}{405--414}.

\bibitem[{\citenamefont{Bracha}(1984)}]{Bra84}
\bibinfo{author}{\bibfnamefont{G.}~\bibnamefont{Bracha}}, in
  \emph{\bibinfo{booktitle}{Proc. ACM PODC'84}} (\bibinfo{year}{1984}), pp.
  \bibinfo{pages}{154--162}.

\bibitem[{\citenamefont{Canetti and Rabin}(1993)}]{CT93}
\bibinfo{author}{\bibfnamefont{R.}~\bibnamefont{Canetti}} \bibnamefont{and}
  \bibinfo{author}{\bibfnamefont{T.}~\bibnamefont{Rabin}}, in
  \emph{\bibinfo{booktitle}{Proc. ACM STOC'93}} (\bibinfo{publisher}{ACM},
  \bibinfo{year}{1993}), pp. \bibinfo{pages}{42--51}, ISBN
  \bibinfo{isbn}{0-89791-591-7}.

\bibitem[{\citenamefont{Ben-Or and Hassidim}(2005)}]{BH05}
\bibinfo{author}{\bibfnamefont{M.}~\bibnamefont{Ben-Or}} \bibnamefont{and}
  \bibinfo{author}{\bibfnamefont{A.}~\bibnamefont{Hassidim}}, in
  \emph{\bibinfo{booktitle}{Proc. ACM STOC'05}} (\bibinfo{publisher}{ACM},
  \bibinfo{year}{2005}), pp. \bibinfo{pages}{481--485}.

\bibitem[{\citenamefont{Fitzi et~al.}(2001)\citenamefont{Fitzi, Gisin, and
  Maurer}}]{FGM01}
\bibinfo{author}{\bibfnamefont{M.}~\bibnamefont{Fitzi}},
  \bibinfo{author}{\bibfnamefont{N.}~\bibnamefont{Gisin}}, \bibnamefont{and}
  \bibinfo{author}{\bibfnamefont{U.}~\bibnamefont{Maurer}},
  \bibinfo{journal}{Phys. Rev. Lett.} \textbf{\bibinfo{volume}{87}},
  \bibinfo{pages}{217901} (\bibinfo{year}{2001}).

\bibitem[{\citenamefont{Pease et~al.}(1980)\citenamefont{Pease, Shostak, and
  Lamport}}]{PSL80}
\bibinfo{author}{\bibfnamefont{M.}~\bibnamefont{Pease}},
  \bibinfo{author}{\bibfnamefont{R.}~\bibnamefont{Shostak}}, \bibnamefont{and}
  \bibinfo{author}{\bibfnamefont{L.}~\bibnamefont{Lamport}},
  \bibinfo{journal}{J. ACM} \textbf{\bibinfo{volume}{27}}, \bibinfo{pages}{228}
  (\bibinfo{year}{1980}).

\bibitem[{\citenamefont{Fischer et~al.}(1985)\citenamefont{Fischer, Lynch, and
  Merritt}}]{FLM85}
\bibinfo{author}{\bibfnamefont{M.~J.} \bibnamefont{Fischer}},
  \bibinfo{author}{\bibfnamefont{N.~A.} \bibnamefont{Lynch}}, \bibnamefont{and}
  \bibinfo{author}{\bibfnamefont{M.}~\bibnamefont{Merritt}}, in
  \emph{\bibinfo{booktitle}{Proc. ACM PODC'85}} (\bibinfo{year}{1985}), pp.
  \bibinfo{pages}{59--70}.

\bibitem[{\citenamefont{Graham and Yao}(1989)}]{GY89}
\bibinfo{author}{\bibfnamefont{R.~L.} \bibnamefont{Graham}} \bibnamefont{and}
  \bibinfo{author}{\bibfnamefont{A.~C.} \bibnamefont{Yao}}, in
  \emph{\bibinfo{booktitle}{Proc. ACM STOC'89}} (\bibinfo{year}{1989}), pp.
  \bibinfo{pages}{467--478}.

\bibitem[{\citenamefont{Fitzi et~al.}(2006)\citenamefont{Fitzi, Wolf, and
  Wullschleger}}]{FWW06}
\bibinfo{author}{\bibfnamefont{M.}~\bibnamefont{Fitzi}},
  \bibinfo{author}{\bibfnamefont{S.}~\bibnamefont{Wolf}}, \bibnamefont{and}
  \bibinfo{author}{\bibfnamefont{J.}~\bibnamefont{Wullschleger}}, in
  \emph{\bibinfo{booktitle}{Proc. IEEE ISIT'06}} (\bibinfo{year}{2006}), pp.
  \bibinfo{pages}{504--505}.

\end{thebibliography}

\clearpage

\appendix
\section{Appendix}

\section{Estimating Directions}
In this Section, we analyse the protocol \pname{2ED} to exchange a direction between two parties. Since this cannot be done perfectly, the receiver has to \emph{estimate} the direction sent by the sender. This task is formally defined by:
\begin{definition} 
	A $\delta$-\emph{estimate direction} protocol is a two-party protocol where one node (the sender) sends a direction $u$ to the other node (the receiver). Upon termination the receiver gets a $\delta$-approximation $v$ of $u$, that is, $d(u,v) \leq \delta$.
\end{definition}

This simple protocol has several advantages: it does not require any quantum memory or the creation of entangled states, and it succeeds even if the quantum channel has a depolarizing noise.  But the downside of this choice is that the protocol is not optimal in the number of qubits sent to achieve a certain accuracy. Any other protocol can be used here~\cite{BRS07};

\begin{algorithm}
\SetAlgorithmName{Protocol}{protocol}{List of Protocols}
	\LinesNumbered
	\DontPrintSemicolon
	\SetKwInOut{Input}{input}\SetKwInOut{Output}{output}
	\Input{Sender, direction $u$}
	\Output{Receiver, direction $v$ }

\SetKwBlock{SEND}{Sender: \pname{2ED-Send}}{}	
\SEND{
	Prepare $3n$ qubits with direction $u$ \;
	Send them to the receiver

}
\SetKwBlock{RECIEVE}{Receiver: \pname{2ED-Receive}}{}	
\RECIEVE{
	Receive $3n$ qubits from the sender\;
	Measure $n$ qubits with $\sigma_x$ and compute $p_x$, the frequency of getting outcome~$+1$ \;
	Similarly on the remaining qubits, compute $p_y$ and $p_z$ with measurements $\sigma_y$ and $\sigma_z$ on $n$ qubits each \;
	Assign $x \leftarrow 2 p_x-1$, $y \leftarrow 2 p_y-1$, $z \leftarrow 2 p_z-1$;\ 
	Assign $l \leftarrow \sqrt{x^2+y^2+z^2}$\;
	Output $v \leftarrow (x/l,y/l,z/l)$\;
}
\caption{\pname{2ED}}
\end{algorithm}

\setcounter{theorem}{1}
\begin{theorem}
For all $\delta>0$, using a depolarising channel $\rho \mapsto (1-\eps)\rho + \eps\I/2$ between the sender and the receiver, protocol \pname{2ED} provides to the receiver a $(1-\eps)\delta +\frac{5\eps}{2}$ approximation of the sender's direction. It succeeds with probability $\qsucc \geq \left(1-2e^{\left(-2 n\delta^2/25\right)}\right)^3$.
\end{theorem}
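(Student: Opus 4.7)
The plan is to decompose the analysis into three independent pieces: modeling the noisy measurement statistics, controlling statistical fluctuations via Hoeffding's inequality, and converting the resulting coordinate-wise errors into a bound on $d(u,v)$ after renormalization. First, if the sender prepares qubits with Bloch vector $u$, then after the depolarizing channel each qubit arrives with Bloch vector $(1-\eps)u$, so the $\sigma_\alpha$ measurement for $\alpha\in\{x,y,z\}$ yields outcome $+1$ with probability $(1+(1-\eps)u_\alpha)/2$. Hence the estimator coordinate $2p_\alpha-1$ is unbiased for $(1-\eps)u_\alpha$, not for $u_\alpha$ itself: this channel shrinkage is the source of the $\eps$-dependent term in the accuracy guarantee.

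Second, I apply Hoeffding's inequality to each basis separately. For each $\alpha$ the frequency $p_\alpha$ is the mean of $n$ independent Bernoulli trials, giving $\Pr[\,|p_\alpha - E[p_\alpha]| > \delta/5\,] \leq 2e^{-2n\delta^2/25}$. Because the three Pauli bases are measured on disjoint groups of $n$ qubits, the three deviation events are independent, so their intersection has probability at least $(1 - 2e^{-2n\delta^2/25})^3$, which is the claimed $\qsucc$. On this good event the vector $\Delta := w - (1-\eps)u$ satisfies $\|\Delta\|_\infty \leq 2\delta/5$ and consequently $\|\Delta\|_2 \leq 2\sqrt{3}\,\delta/5$.

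Third, I convert this coordinate deviation into the required Euclidean distance between unit vectors. The receiver outputs $v = w/\|w\|$, and the triangle inequality gives
\[
d(u,v) \leq \|u - (1-\eps)u\| + \|(1-\eps)u - w\| + \|w - v\| \leq \eps + \|\Delta\|_2 + \bigl|1 - \|w\|\bigr|.
\]
To sharpen this to the form $(1-\eps)\delta + 5\eps/2$, I decompose $\Delta = \Delta_\parallel u + \Delta_\perp$ with $\Delta_\perp \perp u$ and observe that the angle $\theta$ between $v$ and $u$ satisfies $\sin\theta = \|\Delta_\perp\|/\|w\|$, so only the perpendicular component of $\Delta$ contributes to the angular error, while $\Delta_\parallel$ and $\eps$ jointly control $|1-\|w\||$. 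The main technical obstacle is precisely this final geometric bookkeeping: the depolarizing shrinkage and the finite-sample noise enter $w$ in fundamentally different ways---one scales its length, the other randomly perturbs its components---and they must be tracked separately through the nonlinear renormalization $w \mapsto w/\|w\|$ before being recombined into the stated accuracy. The Bloch-vector analysis of the channel and the Hoeffding step are routine once this bookkeeping is in place.
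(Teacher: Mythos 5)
Your first two steps coincide with the paper's proof: the depolarizing channel shrinks the Bloch vector so that $2p_\alpha-1$ is an unbiased estimator of $(1-\eps)u_\alpha$ (the paper writes this equivalently as $p_\alpha=(1-\eps)p'_\alpha+\eps/2$), and Hoeffding's inequality on each of the three disjoint groups of $n$ qubits with deviation $\delta/5$ gives the success probability $\left(1-2e^{-2n\delta^2/25}\right)^3$. The gap is in your third step, which you do not actually carry out and whose natural completion does not yield the stated constant. Your crude triangle inequality gives $d(u,v)\leq 2\eps+\frac{4\sqrt{3}}{5}\delta$, whose $\delta$-coefficient $4\sqrt{3}/5\approx 1.39$ already exceeds the target. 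Your proposed fix --- decomposing $\Delta$ into components parallel and perpendicular to $u$ and using $\sin\theta=\|\Delta_\perp\|/\|w\|$ --- keeps the ball of statistical error centred at the \emph{shrunken} vector $(1-\eps)u$; the tangent-line bound then gives $\sin\theta\leq\frac{2\sqrt{3}\delta}{5(1-\eps)}$ and hence a bound of the form $d(u,v)\leq\delta/(1-\eps)$. That is a multiplicative, not additive, dependence on $\eps$, and $\delta/(1-\eps)\leq(1-\eps)\delta+\frac{5\eps}{2}$ holds only under extra restrictions on $\delta$ and $\eps$, so you would not recover the theorem as stated without further argument. (Also note that since the output $v$ is a unit vector, $d(u,v)=2\sin(\theta/2)$ exactly; the term $\bigl|1-\|w\|\bigr|$ in your triangle inequality plays no role once you pass to the angular picture, so your sketch mixes two bounding strategies.)

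The paper avoids all of this by folding the channel offset into the per-coordinate deviation from the \emph{noise-free} probabilities: $|p_\alpha-\cos^2(\theta_\alpha/2)|\leq(1-\eps)\frac{\delta}{5}+\frac{\eps}{2}$, so the measured point $(x,y,z)$ lies in a ball of radius $\sqrt{3}\left(\frac{2(1-\eps)\delta}{5}+\eps\right)$ centred at the \emph{unit} vector $u$. A point within distance $\alpha\leq 1$ of a unit vector makes angle at most $\arcsin\alpha$ with it, and the elementary inequality $2\sin\left(\frac{1}{2}\arcsin\alpha\right)\leq\frac{5}{2\sqrt{3}}\alpha$ for $\alpha\in[0,1]$ then gives $d(u,v)\leq\frac{5}{2\sqrt{3}}\cdot\sqrt{3}\left(\frac{2(1-\eps)\delta}{5}+\eps\right)=(1-\eps)\delta+\frac{5\eps}{2}$ in one stroke. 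This inequality (or an equivalent) is the missing ingredient in your sketch: it is what converts the chord bound inside the ball into a chord bound between unit vectors with exactly the right constant-factor loss.
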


\begin{proof}
We will prove this theorem in two steps. First, we consider the case when the communication channel is noise free ($\eps = 0$), and then, we see how depolarizing noise affects the approximation factor. 

In the noise-free case, let us fix $\delta >0$ and denote by $\theta_x, \theta_y$, and $\theta_z$ the angles between $u$ and the $x$-, $y$-, and $z$-axis of the local frame of the receiver. So, $\cos^2\frac{\theta_x}{2}$ is the probability of getting outcome $+1$ after the Pauli measurement $\sigma_x$ on a qubit. Similarly, $\cos^2\frac{\theta_y}{2}$ and $\cos^2\frac{\theta_z}{2}$ are the probabilities for outcome $+1$ on measurement $\sigma_y$ and $\sigma_z$ respectively. 

Now, we will show that each of the following three conditions:
\begin{align}
	\label{eq:px}|p_x-\cos^2\frac{\theta_x}{2}| & \leq \delta/5, \\
	\label{eq:py}|p_y-\cos^2\frac{\theta_y}{2}| & \leq \delta/5, \\
	\label{eq:pz}|p_z-\cos^2\frac{\theta_z}{2}| & \leq \delta/5,
\end{align}
holds with probability at least $(1- 2 e^{-\frac{2}{25}n\delta^2})$, and later show that Equations \eqref{eq:px}, \eqref{eq:py}, and \eqref{eq:pz} imply that $d(u,v) \leq \delta$.

We know in the ideal case, when $n\rightarrow\infty$ the relative frequency $p_x \rightarrow \cos^2\frac{\theta_x}{2}$ but in \pname{2ED} $n$ is finite. So, using Hoeffding's inequality we get,
\begin{align}
	\Pr\left(\abs{p_x-\cos^2\frac{\theta_x}{2}}>\frac{\delta}{5}\right) 
		\leq 2 \exp\left(-\frac{2n^2\delta^2}{25n}\right),
\end{align}
hence Conditions \eqref{eq:px}, \eqref{eq:py}, and \eqref{eq:pz} are all satisfied with probability at least $\left(1-2e^{\left(-2 n\delta^2/25\right)}\right)^3$.
Denoting the vector $u$ in the receiver's basis by $(x_u,y_u,z_u)$, we have
\begin{align}
	x_u = \cos \theta_x = 2 \cos^2\frac{\theta_x}{2} - 1.
\end{align}
So,
\begin{align}
\abs{x-x_u} &= \abs{(2p_x-1) - \left(2 \cos^2\frac{\theta_x}{2} - 1\right)},\\
&= 2\abs{\left(p_x-\cos^2\frac{\theta_x}{2}\right)},\\
\label{eq:pupx}&\leq \ 2\delta/5.
\end{align}
Here, Inequality~\eqref{eq:pupx} follows from Inequality~\eqref{eq:px}.
Similarly we have, 
\begin{align}
	\label{eq:pupyz} y-y_u \leq 2\delta/5 \quad\text{and}\quad z-z_u \leq 2\delta/5.
\end{align}
Using~\eqref{eq:pupx} and \eqref{eq:pupyz}, we get, 
\begin{align}
	d((x,y,z),u)
		&= \sqrt{(x-x_u)^2+(y-y_u)^2+(z-z_u)^2}, \nonumber \\
		&\leq  \sqrt{(2\delta/5)^2+(2\delta/5)^2+(2\delta/5)^2}, \\
		&= \frac{2\sqrt{3}\delta}{5}. 
\end{align}
This means that $(x,y,z)$ is within a sphere of radius $\frac{2\sqrt{3}\delta}{5}$ centered in $u$, so its angle $\theta$ with $u$ is at most $\arcsin(2\sqrt{3}\delta/5)$.
Since $v$ is the normalization of $(x, y, z)$, its angle with $u$ is also $\theta$ and from a simple trigonometric observation, we have,
\begin{align}
	\label{eq:arcdist}d(u,v) = 2 \sin (\theta/2) \leq 2\sin \left(\frac{1}{2} \arcsin(2\sqrt{3}\delta/5)\right).
\end{align}
Moreover, one can check that for all $\alpha \in [0,1],\   \sin \left(\frac{1}{2}\arcsin(\alpha)\right)\leq\frac{5}{4\sqrt{3}}\alpha$, thus,
\begin{align}
	d(u,v) \leq \delta.
\end{align}

So far we have considered only a noiseless channel, let us now turn to the case of a depolarizing channel: if the sender sends a pure state $\ket{\psi}$, the receiver gets the mixed state 
\begin{align}
\label{eq:depolar}\rho = (1-\eps)\proj{\psi} + \eps \frac{\I}{2}.
\end{align}

From Equation~\eqref{eq:depolar} one can see that the effective relative frequency $p_x$ is given by
\begin{align}
p_x &= (1-\eps) p_x' + \frac{\eps}{2},
\end{align}
where $p_x'$ is the relative frequency that the receiver would have got if the channel was noise-free, meaning that $\abs{p'_x-\cos^ 2 \frac{\theta_x}{2}}\leq \delta/5$. Therefore,
\begin{align}
|p_x-\cos^2\frac{\theta_x}{2}| &= |(1-\eps) p_x' + \frac{\eps}{2} -\cos^2\frac{\theta_x}{2} |,\\
\label{eq:upx}&\leq|(1-\eps)\frac{\delta}{5}+\frac{\eps}{2} - \eps\cos^2\frac{\theta_x}{2}|,\\
\label{eq:cosr}&\leq|(1-\eps)\frac{\delta}{5}+\frac{\eps}{2}|,\\
&=(1-\eps)\frac{\delta}{5}+\frac{\eps}{2}.
\end{align}
Here  Inequality~\eqref{eq:cosr} follows because $\eps\cos^2({\theta_x}/{2})$ is positive.

The rest of the analysis remains the same as the noise-free case by replacing $\delta/5$ by $\arcsin(2\sqrt{3}\delta/5)$ in Equation~\eqref{eq:px}.

\end{proof}
 
\section{Step 1: Weak Consensus}

Let us start by giving a more formal definition of a weak consensus protocol.
\begin{definition}
	A $(\delta,\eta)$-weak consensus protocol is a $m$-party protocol, in which each node $P_i$ has an input direction $w_i$ and outputs either a direction $u_i$ or $\perp$, that satisfies the following two properties:
	\begin{description}
		\item[$\boldsymbol{\delta}$-weak persistency]  If there exists a direction $s$ such that for every correct node $P_i$, $d(s,w_i) \leq \delta$, then every correct node $P_i$ outputs a direction $u_i$ with $d(s,u_i) \leq \delta$.
		\item[$\boldsymbol{\eta}$-weak consistency] For every pair of correct nodes $P_i$ and $P_j$ who output $u_i\neq \perp$ and $u_j\neq  \perp$ respectively, we have $d(u_i,u_j)\leq \eta$.
	\end{description}
\end{definition}

\setcounter{algocf}{2}
\begin{algorithm}
\SetAlgorithmName{Protocol}{protocol}{List of Protocols}
	\DontPrintSemicolon
	\LinesNumbered
	\SetKwInOut{Input}{Input}
	\SetKwInOut{Output}{Output}
	\Input{Direction $w_i$}
	\Output{Direction $u_i$ or $\perp$}

Send $w_i$ to all other nodes \;
Receive $a_i[j] \leftarrow$ direction received from $P_j$ \;
Create the set $S_i \leftarrow \{ P_j: d(w_i, a_i[j]) \leq 3\delta \}$ \;
\uIf{$|S_i| \geq m-t$}
	{Assign $u_i\leftarrow w_i$}
\Else{Assign $u_i \leftarrow \perp$}
Output $u_i$
\caption{\pname{Weak-Consensus}}
\end{algorithm}

\begin{theorem} \label{th:weak}
Using a two-party $\delta$-estimate direction protocol that succeeds with probability $\qsucc$, the protocol \pname{Weak Consensus} is a $(\delta, 8\delta)$-weak consensus protocol tolerant to $t < m/3$ faulty nodes that succeeds with probability at least $\qsucc^{m^2-m}$. 
\end{theorem}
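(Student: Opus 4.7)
The plan is to decompose the proof into three parts: bounding the probability that the underlying direction transmissions succeed, proving weak persistency, and finally weak consistency. Let $\mathcal{E}$ be the event that every transmission performed in line~1 of \pname{Weak-Consensus} between a pair of correct nodes succeeds to $\delta$-accuracy. Since each such transmission independently succeeds with probability at least $\qsucc$ and there are at most $m(m-1)$ ordered pairs to worry about, $\Pr[\mathcal{E}] \geq \qsucc^{m^2-m}$ (this bound is loose because it also counts pairs involving faulty nodes, but it matches the statement). Throughout, I work conditional on $\mathcal{E}$, under which $d(a_i[j], w_j) \leq \delta$ for every pair of correct nodes $(P_i, P_j)$.

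For $\delta$-weak persistency, assume there is a direction $s$ with $d(s, w_i) \leq \delta$ for every correct $P_i$. I would check that for any two correct nodes $P_i, P_j$ the triangle inequality gives $d(w_i, a_i[j]) \leq d(w_i,s) + d(s,w_j) + d(w_j, a_i[j]) \leq 3\delta$, so every correct node lies in $S_i$. Since at least $m-t$ nodes are correct, the test $|S_i| \geq m-t$ passes, the protocol outputs $u_i = w_i$, and the conclusion $d(s, u_i) \leq \delta$ is immediate.

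The crux is $(8\delta)$-weak consistency. Suppose two correct nodes $P_i, P_j$ both produce $u_i = w_i$ and $u_j = w_j$, so that $|S_i|, |S_j| \geq m - t$. By inclusion--exclusion $|S_i \cap S_j| \geq 2(m-t) - m = m - 2t$, and since $t < m/3$ we have $m - 2t > t$, so $S_i \cap S_j$ must contain at least one correct node $P_\alpha$. Chaining through $a_i[\alpha]$, $w_\alpha$, and $a_j[\alpha]$ then yields
\[
d(u_i,u_j) \leq d(w_i,a_i[\alpha]) + d(a_i[\alpha],w_\alpha) + d(w_\alpha,a_j[\alpha]) + d(a_j[\alpha],w_j) \leq 3\delta + \delta + \delta + 3\delta = 8\delta,
\]
where the two outer terms use the defining condition of $S_i$ and $S_j$ and the two inner ones use $\mathcal{E}$ at the correct node $P_\alpha$.

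The main obstacle I anticipate is the pigeonhole step inside the consistency argument: one must rule out that the adversary arranges for $S_i$ and $S_j$ to share only faulty nodes, because if $P_\alpha$ were faulty then no guarantee on $a_i[\alpha]$ and $a_j[\alpha]$ being close to a common $w_\alpha$ would hold and the four-hop chain collapses. This is exactly where the threshold $t < m/3$ is consumed, mirroring the classical Byzantine bound; the remainder of the proof is routine triangle-inequality bookkeeping with the $\delta$-accuracy guarantee of the two-party primitive.
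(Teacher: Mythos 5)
Your proof is correct and follows essentially the same route as the paper's: the same bound $\qsucc^{m^2-m}$ on the event that all pairwise transmissions between correct nodes succeed, the same triangle-inequality argument for $\delta$-weak persistency, and the same four-hop chain $3\delta+\delta+\delta+3\delta$ through a common correct node of $S_i$ and $S_j$ for $(8\delta)$-weak consistency. The only cosmetic difference is that you exhibit that common correct node by direct inclusion--exclusion ($|S_i\cap S_j|\geq 2(m-t)-m=m-2t>t$), whereas the paper argues by contradiction on the subsets of correct members of $S_i$ and $S_j$; these are the same counting argument, and yours is slightly more streamlined.
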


\begin{proof}
	After line 2, the property
	\begin{align}
		\label{eq:aproperty}	\forall \text{ correct nodess } P_i, P_j,  \quad d(a_i[j],w_j)\leq \delta,
	\end{align}
	holds with probability at least $\qsucc^{m^2-m}$ since each of the $m$ nodes uses \pname{2ED} $m-1$ times. The rest of the proof shows that Property \eqref{eq:aproperty} implies $\delta$-weak persistency and $8\delta$-weak consistency. This means that \pname{Weak-Consensus} succeeds with probability at least $\qsucc^{m^2-m}$.
	
\paragraph{Weak persistency.}
We assume there exists a direction $s$ such that the input $w_i$ of every correct node $P_i$ satisfies $d(s,w_i)\leq \delta$. Let $P_i$ be a correct node. We now show that $d(s,u_i)\leq \delta$. The idea is to show that $|S_i| \geq m-t$, hence $d(s,u_i) = d(s,w_i) \leq \delta$. This is done by showing that every correct node is in the set $S_i$. Indeed, let us consider a correct node $P_j$, then by triangular inequality we get,
\begin{align}
	d(w_i,a_i[j])  &\leq d(w_i,s)+d(s,w_j)+d(w_j,a_i[j]).
\end{align}
Each of the first two terms is at most $\delta$ by assumption, and the last one is also at most $\delta$ by Property~\eqref{eq:aproperty}. Thus,
\begin{align} 
	d(w_i,a_i[j]) \leq 3\delta.
\end{align}	  
Since there are at least $(m-t)$ non faulty nodes, $|S_i|\geq (m-t)$. This completes the proof of the $\delta$-weak persistency.
	
\paragraph{Weak consistency.}
Let us consider two correct nodes $P_i$ and $P_j$ which output $u_i\neq \perp$ and $u_j \neq \perp$ respectively.
Now we show that $d(u_i,u_j) \leq 8\delta$. The idea is to show that there exists a direction $w_\alpha$ such that $d(u_i, w_\alpha) \leq 4\delta$ and $d(u_j,w_\alpha) \leq 4\delta$. This is done by first showing that there exists one correct node $P_\alpha$ in both sets $S_i$ and $S_j$.
	
For that, let us define the sets $C_i$ and $C_j$ by, 
\begin{align}
	C_i = \{P_l: P_l\in S_i \text{ and node } P_l \text{ is correct} \}, \\
	C_j = \{P_l: P_l\in S_j \text{ and node } P_l \text{ is correct} \}.
\end{align}
We need to prove that $C_i \cap C_j \neq \emptyset$. We do it by contradiction: let us assume that
\begin{align}
	\label{eq:wcon}C_i \cap C_j = \emptyset.
\end{align}
Note that,
\begin{align}
	|S_j| \geq m-t &\Rightarrow |S_j - C_j| + |C_j| \geq m-t,\\
	\label{eq:tsub}&\Rightarrow t + |C_j| \geq m-t,\\
	&\Rightarrow |C_j| \geq m-2t,\\
	\label{eq:tm32}&\Rightarrow  |C_j| > \frac{m}{3}.
\end{align}
Inequality~\eqref{eq:tsub} follows because there can be at most $t$ faulty nodes, and Inequality~\eqref{eq:tm32} since $t<\frac{m}{3}$.
Now,
\begin{align}
	|S_i \cup S_j| 
		&= |(S_i-C_i) \cup (S_j-C_j)\cup C_i \cup C_j|, \\ \label{eq:wconuse}
		&=  |(S_i-C_i) \cup (S_j-C_j)|+ |C_i| + |C_j|,\\
		&\geq |(S_i-C_i)|+ |C_i| + |C_j|,\\
		&= |(S_i-C_i) \cup C_i| + |C_j|,\\
		&= |S_i| + |C_j|,\\
		&\geq (m-t) + |C_j|,\\ \label{eq:wcf}
		&> m - \frac{m}{3} + \frac{m}{3}.
\end{align}	
Here, Equation~\eqref{eq:wconuse} follows from Equation~\eqref{eq:wcon}, and Inequality~\eqref{eq:wcf} from Inequality~\eqref{eq:tm32}. We just proved that $|S_i \cup S_j|>m$ which contradicts the fact that there are exactly $m$ nodes. So, we have $C_i \cap C_j \neq \emptyset$. 

Consider a correct node $P_\alpha\in (C_i \cap C_j )$. We have:
\begin{align}
	d(u_i, w_\alpha)
		& = d(w_i, w_\alpha), \\
		& \leq d(w_i, a_i[\alpha]) + d(a_i[\alpha], w_\alpha), \\
		& \leq 3\delta + \delta. \label{eqn:4deltai}
\end{align}
The factor $3\delta$ comes from the fact that $P_\alpha$ is in $S_i$ and the remaining $\delta$ since $P_\alpha$ is correct. We can do the same reasoning with the node $P_j$, hence we also have:
\begin{align}
	d(u_j, w_\alpha) \leq 4\delta. \label{eqn:4deltaj}
\end{align}

By combining Equations~\eqref{eqn:4deltai} and~\eqref{eqn:4deltaj}, we prove the $8\delta$-weak consistency:
\begin{align}
d(u_i,u_j) \leq d(u_i,w_k) + d(w_k,u_j) \leq 4\delta + 4\delta = 8\delta.
\end{align}
\end{proof}

\section{Step 2: Graded Consensus}

Again, we shall start by giving a formal definition of a graded consensus protocol.

\begin{definition}
A $(\delta,\eta)$-graded consensus protocol is an $m$-party protocol, in which each node $P_i$ has an input direction $w_i$ and outputs a direction $v_i$  as well as a grade $g_i\in\{0,1\}$, that satisfies the following properties:
\begin{description}
	\item[$\boldsymbol\delta$-graded persistency] If there exists a direction $s$ such that for every correct node $P_i$, $d(s,w_i) \leq \delta$, then every correct node $P_i$ outputs a direction $v_i$ such that $d(s,v_i)\leq \delta$ and $g_i=1$;
	\item[$\boldsymbol\eta$-graded consistency] If there exists a correct node $P_i$ who outputs grade $g_i=1$, then for all pairs ($P_i,P_j$) of correct nodes,  $d(v_j,v_k)\leq \eta$.
\end{description}
\end{definition}

\begin{algorithm}
\SetAlgorithmName{Protocol}{protocol}{List of Protocols}
	\LinesNumbered
	\DontPrintSemicolon
	\SetKwInOut{Input}{Input}\SetKwInOut{Output}{Output}
	\Input{A direction $w_i$}
	\Output{A direction $v_i$ and a grade $g_i \in \{0,1\}$}
	
	Run \pname{Weak-Consensus}($w_i$)\; 
	\tcp{This initialises the variables $u_i$ and $a_i[j]$'s}
	\uIf{$u_i = \perp$}{Send flag $f_i = 0$ to all other nodes}
	\Else{Send flag $f_i = 1$ to all other nodes}
	\ForAll{nodes $P_j$}{$f_i[j] \leftarrow$ Receive $f_j$}
	\ForAll{nodes $P_j$ with $f_i[j]=1$}{Create set $T_i[j] \leftarrow \{P_k: f_i[k]=1,$ and $d(a_i[j], a_i[k]) \leq 10\delta\}$ }
	Assign $l_i\leftarrow \arg \max \{|T_i[j]|\}$ \;
	\uIf{$f_i  = 1$}{Assign $v_i \leftarrow w_i$}
	\Else{Assign $v_i \leftarrow a_i[l_i]$}
	\uIf{$|T_i[l_i]| > m - t $}{Assign $g_i \leftarrow 1$}
	\Else{Assign $g_i \leftarrow 0$}
	Output $(v_i, g_i)$

\caption{\pname{Graded-Consensus}}
\end{algorithm}

From Line 2 to Line 7, the nodes send and receive classical bits, there is no approximation here. An important consequence is that $f_i[j] = f_j$ whenever the nodes $P_i$ and $P_j$ are correct.

\begin{theorem}
	Consider that \pname{Weak Consensus} uses a $\delta$-estimate direction protocol that succeeds with probability $\qsucc$. 
Protocol \pname{Graded Consensus} is a $(\delta, 30\delta)$-graded consensus protocol tolerant to $t < m/3$ faulty nodes that succeeds with probability at least $\qsucc^{m^2-m}$.
\end{theorem}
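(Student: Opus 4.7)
The plan is to push all randomness into the \pname{Weak-Consensus} subroutine called at line~1, since everything below that line exchanges only classical authenticated bits and is therefore deterministic. Conditioned on the success event of that subroutine, which by Theorem~\ref{th:weak} occurs with probability at least $\qsucc^{m^2-m}$, one may freely use $\delta$-weak persistency, $8\delta$-weak consistency, the pointwise bound $d(a_i[j],w_j)\le\delta$ for every pair of correct $P_i,P_j$, and the trivial fact that $f_i[j]=f_j$ whenever both nodes are correct. Under this conditioning the remaining argument is purely geometric and contributes no further failure probability, so the probability bound matches that of \pname{Weak-Consensus}.

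For $\delta$-graded persistency, assume some direction $s$ satisfies $d(s,w_i)\le\delta$ for every correct $P_i$. Weak persistency then gives $u_i\ne\perp$ with $d(s,u_i)\le\delta$, so every correct $P_i$ broadcasts $f_i=1$ and sets $v_i\leftarrow w_i$, which already meets $d(s,v_i)\le\delta$. For any pair of correct indices $j,k$, the triangle inequality through $w_j,s,w_k$ gives $d(a_i[j],a_i[k])\le 4\delta\le 10\delta$, so $T_i[j]$ contains every correct node for each correct $j$; hence $|T_i[l_i]|$ reaches the threshold and the grade is set to $1$.

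For $30\delta$-graded consistency, suppose some correct $P_{i^\star}$ outputs $g_{i^\star}=1$, so that $|T_{i^\star}[l_{i^\star}]|>m-t$ and more than $m-2t$ of its members are correct. Each such correct $\mu$ has $f_\mu=1$ and $u_\mu=w_\mu$, and by $8\delta$-weak consistency these $u_\mu$'s are pairwise $8\delta$-close. The key intermediate step is to propagate this cluster into every correct node's local view: for any correct $P_k$ and any two such $\mu,\mu'$, the triangle inequality yields $d(a_k[\mu],a_k[\mu'])\le\delta+8\delta+\delta=10\delta$, so $T_k[\mu]$ contains all of them, which forces $|T_k[l_k]|>m-2t$ and hence $T_k[l_k]$ to contain at least $m-3t>0$ correct nodes. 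Picking one such $P_{\alpha_k}\in T_k[l_k]$, a short case split on $f_k$ shows $d(v_k,u_{\alpha_k})\le 11\delta$ for every correct $P_k$: if $f_k=0$ then $v_k=a_k[l_k]$ and triangle through $a_k[\alpha_k],w_{\alpha_k}$ pays $10\delta+\delta$; if $f_k=1$ then $v_k=u_k$ and $8\delta$-weak consistency between $u_k$ and $u_{\alpha_k}$ pays $8\delta$. Chaining between two correct $P_j,P_k$ via one more application of weak consistency between $u_{\alpha_j}$ and $u_{\alpha_k}$ gives $d(v_j,v_k)\le 11\delta+8\delta+11\delta=30\delta$.

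The main obstacle is exactly this propagation step: proving that $T_k[l_k]$ contains a correct node for \emph{every} correct $P_k$, not merely for the distinguished $P_{i^\star}$ whose grade is already~$1$. It requires combining $8\delta$-weak consistency with the maximality defining $l_k$ and the hypothesis $t<m/3$ to push $|T_k[l_k]|$ past~$t$. Once that combinatorial bridge is in place, all remaining distance bounds collapse into routine triangle-inequality bookkeeping.
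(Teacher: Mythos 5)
Your proposal is correct and follows essentially the same route as the paper: condition on the success event of \pname{Weak-Consensus} (probability $\qsucc^{m^2-m}$), prove persistency by showing every correct node lands in every correct $T_i[j]$ within $4\delta$, and prove consistency by using the grade-1 node to extract more than $m-2t$ correct flag-1 nodes whose received estimates are pairwise $10\delta$-close, forcing every correct $P_k$'s maximal set $T_k[l_k]$ to contain a correct representative $P_{\alpha_k}$ with $d(v_k,u_{\alpha_k})\leq 11\delta$, and chaining $11\delta+8\delta+11\delta$. The only cosmetic deviation is that in the $f_k=1$ case you bound $d(v_k,u_{\alpha_k})$ by $8\delta$ via weak consistency where the paper simply takes $\alpha_k=k$ and gets $0$; both yield the same $30\delta$ constant.
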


\begin{proof}
Similarly to the \pname{Weak Consensus} protocol, with probability at least $\qsucc^{m^2-m}$, the following property holds:
\begin{align}
	\forall \text{ correct nodes } P_i, P_j,\quad d(a_i[j],w_j)\leq \delta.
\end{align}

\paragraph{Graded persistency.}
We assume there exists a direction $s$ such that, for each correct node $P_i$, $d(s, w_i)\leq \delta$. We first show that every correct node $P_i$ outputs grade $g_i=1$, and then show their output $v_i$ satisfies $d(s,v_i) \leq \delta$.

Let us consider a correct node $P_i$. It outputs $g_i = 1$ if and only if $|T_i[l_i]| \geq m-t$. To show that the later condition holds, we first show that for each of the $(m-t)$ correct nodes $P_j$ we $\abs{T_i[j]} \geq m-t$. Therefore, by definition of $l_i$, we have $\abs{T_i[l_i]}\geq m-t$. Then we show that for all correct nodes $P_\alpha$, we have $d(a_i[j],a_i[\alpha]) \leq 4 \delta$. This will imply that $P_\alpha \in T_i[j]$, and will prove the first part.

Since the nodes $P_j$ and $P_\alpha$ are both correct, and \pname{Weak Consensus} is $\delta$-weak persistent, we know that  $u_j \neq \perp$, $u_\alpha \neq \perp$ with
\begin{align}
	\label{eqn:deltaWeakInGraded}
	d(s,u_j) \leq \delta \quad \text{and}\quad d(s,u_\alpha) \leq \delta.
\end{align}
As a consequence $f_i[j] = f_i[\alpha] = 1$. We also know that $a_i[j]$ and $a_i[\alpha]$ are $\delta$-approximations of $u_j$ and $u_\alpha$ respectively, that is,
\begin{align}
	\label{eqn:deltaApproxInGraded}
	d(a_i[j],u_j)  \leq \delta \quad \text {and}\quad d(a_i[\alpha],u_\alpha)  \leq \delta.
\end{align}
Using the triangular inequality again with the Inequalities~(\ref{eqn:deltaWeakInGraded}) and~(\ref{eqn:deltaApproxInGraded}), we get, 
\begin{align}
	d(a_i[j],a_i[\alpha]) 
	& \leq 
		d(a_i[j], u_j) + d(u_j,s) \nonumber \\
	&\quad + d(s,u_\alpha) + d(u_\alpha,a_i[\alpha]), \\
	&\leq 4\delta.
\end{align}
Since $f_i[j] =1$, the set $T_i[j]$ exists,  and since $f_i[\alpha] = 1$ and $d(a_i[j],a_i[\alpha]) \leq 4\delta \leq 10\delta$, $P_\alpha\in T_i[j]$. This proves that $g_i = 1$.

Now, let us show that $d(s,v_i)\leq \delta$. By $\delta$-weak persistency, we know that $u_i\neq \perp$, therefore, $f_i=1$. In this case, Line 12 assigns $v_i \leftarrow w_i$. As a direct consequence, we get, $d(s, v_i) = d( s, w_i) \leq \delta$. This concludes the proof of the $\delta$-graded persistency.

\paragraph{Graded consistency.}
Let us assume that there exists a correct node that outputs grade $1$. In this case we show that for any two correct nodes $P_i$ and $P_j$ $d(v_i, v_j) \leq 30 \delta$. 

This proof is in three steps. First, we will show that all the correct nodes who are in the sets created at Line 9 are close to each other. More precisely, we will show that for all the correct nodes $P_\alpha$ and $P_\beta$ with $f_\alpha = f_\beta = 1$, we have $d(u_\alpha, u_\beta) \leq 8\delta$. The second step shows that $v_i$ and $v_j$ are $11\delta$-close to some $u_\alpha$  where $P_\alpha$ and $P_\beta$ are correct nodes with $f_\alpha = f_\beta = 1$. The last step combines this two facts to conclude the proof.

Step 1)\quad This first step is a consequence of the $8\delta$-weak consistency of the \pname{Weak Consensus} protocol used at Line 1. Indeed, consider two correct nodes $P_\alpha$ and $P_\beta$ such that $f_\alpha = f_\beta = 1$. This means that $u_\alpha \neq \perp$ and $u_\beta \neq \perp$, hence they satisfy 
\begin{align}
	\label{eqn:weakConsistency}
	d(u_\alpha,u_\beta) \leq 8\delta.
\end{align}

Step 2)\quad We now prove that there exists a correct node $P_\alpha$ such that $d(v_i, u_\alpha) \leq 11\delta$. There are two cases to consider here. First $f_i = 1$: in this case, the correct node $P_i$ outputs $v_i = u_i$, thus $d(v_i, u_i) = 0 \leq 11\delta$. The more interesting case is $f_i = 0$. We are going to show that in this case, there exists a correct node $P_\alpha \in T_i[l_i]$. This is done by showing that the number of nodes in the set $T_i[l_i]$ is more than the number of faulty nodes, that is, $|T_i[l_i]| > m/3$. In a similar manner then for the graded persistency, we will in fact prove that for every correct node $P_k$ with $f_k=1$, $|T_i[k]| > m/3$.

Let us then consider a correct node $P_k$ with $f_k = 1$. By Equation~\eqref{eqn:weakConsistency}, we have $d(u_k, u_{k'})\leq 8\delta$ for every correct node $P_{k'}$ with $f_{k'} = 1$. As a consequence, we also have
\begin{align}
	d(a_i[k],a_i[k'])
		& \leq d(a_i[k], u_k) + d(u_k, u_{k'}) + d(u_{k'}, a_i[k']), \nonumber \\
		& \leq \delta + 8\delta + \delta.
\end{align}
This with Line 9 implies that the set $T_i[k]$ contains every correct node $P_{k'}$ that has $f_{k'} = 1$. 
Let us argue that there are more than $m/3$ such correct nodes. We are in the case where $g_i=1$, that is, $|T_i[l_i]|\geq (m-t)$. We also know that there are at most $t< \frac{m}{3}$ faulty nodes. So, there must be at least $m-2t\geq\frac{m}{3}$ correct nodes in $T_i[l_i]$, that is, there are more than $m/3$ correct nodes $P_{k'}$ with $f_{k'} = 1$.

We just proved that there exists at least one correct node $P_\alpha$ in $T_i[l_i]$, therefore,
\begin{align}
	d(v_i,u_\alpha) 
		& = d(a_i[l_i],u_\alpha), \\
		& \leq d(a_i[l_i], a_i[\alpha]) + d(a_i[\alpha], u_\alpha), \\
		& \leq 10 \delta + \delta. \label{eqn:11deltaj}
\end{align}

Using similar arguments, there exists at least one correct node $P_\beta$ such that 
\begin{align}
	d(v_j, u_\beta) \leq 11\delta. \label{eqn:11deltak}
\end{align}

Step 3)\quad Now using triangular inequality with Inequalities~\eqref{eqn:11deltaj}, \eqref{eqn:weakConsistency}, and~\eqref{eqn:11deltak} we get,
\begin{align}
	d(v_i, v_j) 
		&\leq d(v_i, u_\alpha) + d(u_\alpha, u_\beta) + d(u_\beta, v_j), \\
		&\leq 11\delta + 8\delta + 11\delta. 
\end{align}
This proves the $(30\delta)$-graded consistency of the protocol.
\end{proof}

\section{Step 3: King Consensus}
\label{sec:king}

\begin{definition}
	A $(\delta,\eta)$-king consensus protocol is an $m$-party protocol in which one node $P_k$, called the king, choses a direction $w_k$ and each of the other nodes $P_i$ outputs either a direction $v_i$ or each of them outputs $\perp$, which satisfies the following two properties:
	\begin{description}
		\item[$\boldsymbol\delta$-persistency]  If the king is correct, then all the correct nodes $P_i$ output $v_i \neq \perp$ with $d(w_k,v_i)\leq \delta$.
		\item[$\boldsymbol\eta$-consistency] All correct nodes reach a consensus, that is, they either all output $\perp$, or they all output directions that are $\eta$-close to each other, i.e., for all correct nodes $P_i$ and $P_j$, the distance $d(v_i,v_j)\leq\eta$.
	\end{description}
\end{definition}

Our protocol to solve the king consensus problem uses \pname{Graded-Consensus} and \pname{Classical-Consensus} as subroutines. The latter is a protocol between $m$ parties, in which each node starts with an input bit $g_i$ and outputs a bit $y_i$, that satisfies the following two properties:
	\begin{description}
		\item[Agreement] All correct nodes should output the same bit;
		\item[Validity] If all correct nodes start with the same input $g_i = b$, they should all output this value, that is $y_i = b$.
	\end{description}
\pname{Classical-Consensus} is tolerant to $t < m/3$ faulty nodes (for a protocol see, e.g.,~\cite{PSL80}).

\begin{algorithm}
\LinesNumbered
\SetAlgorithmName{Protocol}{protocol}{List of Protocols}
	\DontPrintSemicolon
	\SetKwInOut{Input}{Input}
	\SetKwInOut{Output}{Output}
	\Input{Id of the king $P_k$.}
	\Output{A direction $v_i$ or $\perp$}

\uIf{I am the king}{
Fix an arbitrary direction $w_k$ \;
Send $w_k$  to all other nodes}
\Else{Receive $w_i \leftarrow$ direction received from the king}
Assign $(v_i, g_i) \leftarrow \pname{Graded-Consensus}(w_i)$\;
Assign $y_i \leftarrow \pname{Classical-Consensus}(g_i)$\; 
\uIf{$y_i = 1$}{Output $v_i$}
\Else{Output $\perp$}
\caption{\pname{King-Consensus}}
\end{algorithm}

\begin{theorem}
Using a $\delta$-estimate direction protocol that succeeds with probability $\qsucc$, \pname{King-Consensus} is a $(\delta,30\delta)$-king consensus protocol that succeeds with probability at least $\qsucc^{m^2}$.
\end{theorem}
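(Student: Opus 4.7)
The plan is to mirror the proofs of the preceding theorems by establishing the two defining properties (persistency and consistency) separately, with the success probability bound coming from counting the invocations of the two-party direction-estimation subroutine. The \pname{King-Consensus} protocol is essentially a composition of \pname{Graded-Consensus} with \pname{Classical-Consensus}, so each of the three claims should reduce to invoking the corresponding guarantee of exactly one of these subroutines. I will also use the fact that \pname{Classical-Consensus} exchanges only bits and hence is deterministic, contributing no failure probability to the overall bound.

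For $\delta$-persistency, I assume the king $P_k$ is correct. With probability at least $\qsucc^{m-1}$, the $m-1$ invocations of the two-party protocol in which the king transmits $w_k$ all succeed, so every correct node receives $w_i$ with $d(w_i, w_k) \leq \delta$. Taking $s := w_k$ in the $\delta$-graded persistency hypothesis of the previous theorem yields $d(w_k, v_i) \leq \delta$ and $g_i = 1$ for every correct node. Then every correct node enters \pname{Classical-Consensus} with the same input bit $1$, so validity forces $y_i = 1$ for all correct nodes, and each one outputs its $v_i$ rather than $\perp$.

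For $30\delta$-consistency I do not assume anything about the king. The agreement property of \pname{Classical-Consensus} equalizes the bit $y$ held by all correct nodes. If $y = 0$, all correct nodes output $\perp$ and consistency holds trivially. If $y = 1$, I argue by the contrapositive of validity: were every correct node's input to \pname{Classical-Consensus} equal to $0$, validity would force $y = 0$, a contradiction. Hence some correct node $P_j$ had $g_j = 1$, which is precisely the precondition of $(30\delta)$-graded consistency, so $d(v_i, v_j) \leq 30\delta$ for every pair of correct nodes and $(30\delta)$-consistency follows.

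For the success probability I count the randomized steps: the king contributes $m-1$ direction estimations, and the embedded \pname{Weak-Consensus} within \pname{Graded-Consensus} contributes $m^2 - m$ more, for at most $m^2$ invocations in total, whence the bound $\qsucc^{m^2}$ by a union-style argument over the subroutine's failure events. The only delicate point in the whole argument — and the part most easily mishandled — is the consistency case with a faulty king, where one must carefully separate the two logically distinct guarantees of \pname{Classical-Consensus}: agreement to equalize the $y_i$'s across correct nodes, and validity (used contrapositively) to produce the single correct node with $g_j = 1$ that activates the graded-consistency hypothesis. Once that dichotomy is cleanly set up, the remainder of the proof is direct substitution into the guarantees already established for \pname{Graded-Consensus}.
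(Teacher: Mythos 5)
Your proposal is correct and follows essentially the same route as the paper's own proof: persistency via the king's $m-1$ successful transmissions plus $\delta$-graded persistency and the validity of \pname{Classical-Consensus}, consistency via the agreement/validity dichotomy on the bit $b$ feeding into $(30\delta)$-graded consistency, and the probability bound by counting the $\delta$-estimate invocations (your count of $m^2-1$ is in fact slightly sharper than the paper's $m^2$, and still implies the stated $\qsucc^{m^2}$). No gaps.
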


\begin{proof}\ 
\paragraph{Persistency.} Let us assume that the king is correct. We want to show that every correct node $P_i$ outputs $v_i \neq \perp$ with  $d(w_k,v_i) \leq \delta$. Since the king is non faulty, with probability at least $\qsucc^m$, we have that for all correct players $P_i$, the distance $d(w_k, w_i) \leq \delta$.

From the $\delta$-graded persistency of \pname{Graded-Consensus} used in Line 6, we know that for all correct nodes $P_i$, $d(v_i,w_k)\leq\delta$ and $g_i=1$ with success proability at least $\qsucc^{m^2}$; And from the validity of \pname{Classical-Consensus}, we have that $y_i = 1$ for all correct nodes $P_i$. Hence all the correct nodes output a $\delta$-approximation of $w_k$ with probability at least $\qsucc^{m^2}$.

\paragraph{Consistency.} To prove consistency we will show that all the correct nodes output $\perp$, or they all output a direction. In this case we also have to show that for every pair $(P_i, P_j)$ of correct nodes, $d(v_i,v_j) \leq 30\delta$.

Since the variables $y_i$ are outputs of \pname{Classical-Consensus}, the agreement property ensures that there exists a bit $b$ such that for all the correct nodes $P_i$, $y_i = b$. 

If $b = 0$, all the correct nodes output $\perp$.

If $b =1$, by validity of \pname{Classical-Consensus}, at least one of the correct nodes, let us denote it by $P_i$, has flag $g_i = 1$. Recall that the ($30\delta$)-graded consistency of \pname{Graded-Consensus} says that we have in this case $d(v_i,v_j)\leq 30\delta$ for every correct nodes $P_i$ and $P_j$.
\end{proof}

\end{document}